\documentclass[%
 reprint,
superscriptaddress,
 amsmath,amssymb,
 aps,
prx,
floatfix,
]{revtex4-2}

\usepackage{graphicx}
\usepackage{dcolumn}
\usepackage{bm}

\usepackage{hyperref}
\usepackage{booktabs}
\usepackage[utf8]{inputenc}
\usepackage{amsfonts}
\usepackage{amsthm}
\usepackage{algorithmic}
\usepackage{textcomp}
\usepackage{xcolor}
\usepackage{mathtools}
\usepackage{multirow}
\usepackage{braket}
\usepackage{quantikz}
\usepackage{tikz}
\usepackage{graphicx}
\usepackage{totpages}
\usepackage{cleveref}
\usepackage{cancel}
\usepackage[normalem]{ulem}
\usepackage{siunitx}
\renewcommand{\selectlanguage}[1]{} 
\renewcommand{\bar}[1]{\overline{#1}}

\newcommand{\supp}{\mathrm{supp}}
\renewcommand{\proj}{\Pi}

\newcommand{\cP}{\mathcal{P}}
\newcommand{\cI}{\mathcal{I}}
\newcommand{\cJ}{\mathcal{J}}

\newtheorem{theorem}{Theorem}
\newtheorem{definition}{Definition}

\AddToHook{cmd/appendix/before}{%
  \crefalias{section}{appendix}%
  \crefalias{subsection}{subappendix}%
  \crefalias{subsubsection}{subsubappendix}%
}

\Crefname{equation}{Eq.}{Eqs.}
\crefname{equation}{Eq.}{Eqs.}
\Crefname{definition}{Def.}{Defs.}
\crefname{definition}{Def.}{Defs.}
\Crefname{figure}{Fig.}{Figs.}
\crefname{figure}{Fig.}{Figs.}
\Crefname{section}{Sec.}{Secs.}
\crefname{section}{Sec.}{Secs.}
\Crefname{assumption}{Assum.}{Assums.}
\crefname{assumption}{Assum.}{Assums.}
\Crefname{appendix}{Appendix}{Appendices}
\crefname{appendix}{Appendix}{Appendices}

\begin{document}

\preprint{APS/123-QED}

\title{In-Situ Simultaneous Magic State Injection on Arbitrary CSS qLDPC Codes}

\author{Kun Liu}
\email{kun.liu.kl944@yale.edu}
\affiliation{Department of Computer Science, Yale University, New Haven, Connecticut 06511, USA}
\affiliation{Yale Quantum Institute, Yale University, New Haven, Connecticut 06511, USA}

\author{Shifan Xu}
\affiliation{Yale Quantum Institute, Yale University, New Haven, Connecticut 06511, USA}
\affiliation{Department of Applied Physics, Yale University, New Haven, Connecticut 06511, USA}

\author{Tomas Jochym-O'Connor}
\affiliation{IBM Quantum, IBM T.J. Watson Research Center, Yorktown Heights, New York 10598, USA}

\author{Zhiyang He}
\affiliation{Department of Mathematics, Massachusetts Institute of Technology, Cambridge, MA 02139, USA}

\author{Shraddha Singh}
\affiliation{IBM Quantum, IBM T.J. Watson Research Center, Yorktown Heights, New York 10598, USA}

\author{Yongshan Ding}
\email{yongshan.ding@yale.edu}
\affiliation{Department of Computer Science, Yale University, New Haven, Connecticut 06511, USA}
\affiliation{Yale Quantum Institute, Yale University, New Haven, Connecticut 06511, USA}
\affiliation{Department of Applied Physics, Yale University, New Haven, Connecticut 06511, USA}

\date{\today}

\begin{abstract}

Quantum low-density parity-check (qLDPC) codes can encode many logical qubits within a single code block at low physical qubit overhead, yet magic state injection into such codes remains largely underexplored. Existing state injection proposals for qLDPC codes predominantly follow an external prepare-and-transfer paradigm, in which raw magic states are prepared outside the target code block and subsequently injected via inter-code operations. 
We propose the first \emph{in-situ} magic state injection: a scheme in which logical magic states are directly prepared within a qLDPC memory block, only using resources required for syndrome extraction.
We show that our scheme is generalizable to any CSS qLDPC code,
with examples of circuit-level simulations on the $[[144,12,12]]$ Bivariate Bicycle (BB) code and the $[[225,9,4]]$ Hypergraph Product code. 
We focus on a regime where correlated injection errors are negligible.
In the BB code, this corresponds to a configuration that simultaneously injects four logical $\ket{Y}$ states.
Under a uniform depolarizing noise model with physical error rate $10^{-3}$, this achieves an injection error rate of $1.62 \times 10^{-3}$ per logical qubit, while the correlated-error contribution is only $2 \times 10^{-5}$ per logical qubit (about $1\%$ of the injection error rate).
Under a hardware-motivated asymmetric noise model where single-qubit gate errors are $10\%$ of two-qubit gate errors,
the injection error rate per logical qubit falls to $ 6.7 \times 10^{-4} $, below the error rate ($ 10^{-3} $) of the two-qubit gates used to encode the magic states.
Its simplicity allows our scheme to be applied to arbitrary CSS qLDPC codes using only the ancilla qubits native to syndrome extraction, and yield a reduction in space overhead relative to both prepare-and-transfer approaches and surface-code-based magic state injection schemes.
\end{abstract}

\maketitle


\section{Introduction}
The Eastin-Knill theorem~\cite{eastin_restrictions_2009} implies that no quantum error-correcting code admits a fully transversal universal gate set. In particular, for many leading 2D code families, including surface codes and 2D color codes, logical non-Clifford gates are not available transversally~\cite{bravyi_classification_2013}. A standard route to universality is therefore to consume high-fidelity magic states through gate teleportation. The first step in such a pipeline is to prepare a noisy encoded magic state in the target codespace, a primitive commonly referred to as \emph{magic state injection} (MSI)~\cite{li_magic_2015,lao_magic_2022,gidney_cleaner_2023}. MSI for surface and color codes has been extensively developed and experimentally demonstrated~\cite{daguerre_experimental_2025, dasu_breaking_2025, kim_magic_2024,rodriguez2024experimental}, and can produce encoded magic states with logical error rates on the order of the error rate of the underlying physical operations. These injected states are then further purified by magic state distillation~\cite{bravyi_magic_2012, bravyi_universal_2005,rodriguez2024experimental,xu_distilling_2026}, cultivation~\cite{chamberland_very_2020, chen_efficient_2025-1, gidney_magic_2024, sahay_fold-transversal_2025}, or related protocols.

With the advancement of quantum low-density parity-check (qLDPC) codes~\cite{panteleev_asymptotically_2022, 
panteleev_degenerate_2021, tillich_quantum_2014, kovalev_quantum_2013,breuckmann_quantum_2021,gottesman_fault-tolerant_2014}, methods for parallel magic state generation have received recent interest. 
Unlike topological codes with large space overheads, qLDPC codes can encode many logical qubits in a single block at constant rate asymptotically. 
This raises a natural question: can one prepare \emph{multiple} logical non-Clifford resource states directly inside a single qLDPC memory block, thereby reducing the space overhead of magic state preparation? Answering this question requires an MSI primitive that is native to qLDPC memories rather than one that first prepares resource states elsewhere and then transfers them into the code block.

Existing approaches do not yet provide such a primitive. 
Recent proposals for qLDPC magic state preparation largely follow a \emph{prepare-and-transfer} paradigm, where magic states are first prepared in physical qubits or auxiliary logical blocks and then teleported into the target qLDPC memory~\cite{zhang_constant-overhead_2025,xu_batched_2025}. Related work has also explored implementing non-Clifford operations through code switching to specially structured codes, such as 3D lifted-product codes with transversal non-Clifford gates~\cite{li_transversal_2025}. These approaches are important advances, but they either require additional auxiliary structures or rely on specialized code families. What is still missing is a general \emph{in-situ} injection primitive that works for generic qLDPC memories.

In this work, we present a magic state injection scheme for arbitrary CSS codes. 
For a CSS qLDPC code with parameters $[[n,k,d]]$, our scheme can inject up to $k$ logical magic states simultaneously into a single code block. The construction uses only the ancilla resources already needed for syndrome extraction, and does not require separate auxiliary code blocks for state transfer. 
At a conceptual level, our result shows that MSI on multiple logical qubits can be carried out natively inside a qLDPC memory. Our work opens a route to non-Clifford resource preparation with lower space overhead in architectures where qLDPC memories are the main computational substrate.

To benchmark our MSI scheme, we conduct a case study on the $[[144,12,12]]$ bivariate bicycle (BB) code~\cite{bravyi_high-threshold_2024}.
Under a uniform depolarizing noise model with physical error rate $p=10^{-3}$, the generic construction yields a logical error rate per logical qubit, or the \emph{injection error rate}, of approximately $4.2 \times 10^{-2}$ when injecting all $k=12$ logical qubits.
This result demonstrates functionality, but the error rate is too high.
To improve performance, we reduce the number of injected magic states and introduce deterministic stabilizers which detect injection errors.
With this optimized scheme, under uniform depolarizing noise at physical error rate $10^{-3}$, we inject 4 logical qubits into the $[[144,12,12]]$ BB code with per logical qubit injection error rate as low as $1.62\times 10^{-3}$ and a discard rate of 93\%. 
We focus on this regime because it yields negligible correlated injection errors, contributing about $ 1\% $ of the injection error rate.
This is important for downstream magic state distillation schemes that assume negligible correlated noise on the input qubits.
Under a hardware-motivated asymmetric noise model where single-qubit gate errors are $10\%$ of two-qubit gate errors, the per logical qubit injection error rate further drops to $6.7\times 10^{-4}$, which is below the two-qubit gate error rate of $10^{-3}$. In this regime, our construction recovers the characteristic feature of Li's surface code injection scheme~\cite{li_magic_2015}: the injected magic state can have superior fidelity to the worst physical operations used to encode the magic states into the target codespace. We further apply the construction to the $[[225,9,4]]$ hypergraph product (HGP) code~\cite{xu_constant-overhead_2023}, where we inject up to 6 logical qubits and obtain an injection error rate per logical qubit of $8 \times 10^{-3}$ under uniform depolarizing noise.

Our current injection error rates do not yet match those achieved by the best surface code magic state injection schemes; nevertheless, this approach offers a meaningful advantage in space overhead at comparable code distance, as it prepares multiple logical magic states directly within a single qLDPC memory block — a compactness that makes it particularly attractive for near-term demonstrations and early fault-tolerant architectures where qubit footprint remains a central constraint. More broadly, the primary contribution of this work is to establish a qLDPC-native injection primitive that can serve as a natural foundation for future qLDPC-based cultivation and distillation protocols, and we anticipate that further optimization of logical Pauli representatives~\cite{cross_improved_2024,yoder_tour_2025} and syndrome extraction circuits~\cite{gidney_cleaner_2023,lao_magic_2022,singh_high-fidelity_2022} could yield material improvements in performance.

The remainder of this paper is organized as follows.
In \Cref{sec:k-injection}, we introduce the general $k$-injection scheme.
In \Cref{sec:improved-injection-scheme}, we present the improved injection scheme.
In \Cref{sec:evaluation}, we report the circuit-level simulation results.

\section{Preliminaries}
\label{sec:background}


\subsection{Magic state injection}

The Eastin-Knill theorem~\cite{eastin_restrictions_2009} states that no quantum error-correcting code admits a universal transversal gate set.
A standard route to universality is therefore to prepare high-fidelity magic states and consume them through gate teleportation or related protocols~\cite{litinski_game_2019,litinski_magic_2019}.
In this work, we focus on single-qubit magic states on the $XY$ plane of the Bloch sphere, namely the $+1$ eigenstates of observables of the form $\cos\theta\,X+\sin\theta\,Y$.
For $\theta=\pi/4$, this is the $H_{\mathrm{XY}}$ eigenstate, i.e., the $\ket{T}$ state, which is the target output of recent cultivation protocols~\cite{gidney_magic_2024,sahay_fold-transversal_2025,chen_efficient_2025-1}.

For a single-qubit rotation $R_z(\theta):=e^{-i\theta Z/2}$, define
\begin{equation}
|\theta\rangle := R_z(\theta)|+\rangle.
\end{equation}
This state is the $+1$ eigenstate of the \emph{rotated Pauli observable}~\cite{li_magic_2015}
\begin{equation}
M \, := \, R_z(\theta) X R_z(-\theta)
= \cos\theta\,X + \sin\theta\,Y.
\label{eq:single-qubit-magic-op}
\end{equation}
Since our formalism applies to arbitrary rotation angles, we suppress the angle $\theta$ in the notation whenever it is not essential.
When $M$ is supported on physical qubit $q$, we write it as $M_q$.

An $[[n,k,d]]$ stabilizer code is specified by a stabilizer group $S \subseteq \mathcal{P}_n$ with a chosen set of logical Pauli representatives $\{\bar X_i,\bar Z_i\}_{i=1}^k \subseteq N(S)\setminus S$, where $N(S)$ is the normalizer of $S$.
In this work we focus on CSS codes, so the stabilizer generators are taken to be either $X$-type or $Z$-type.
We write $\mathcal{S}=\{s_a\}_{a=1}^{n-k}$ for a chosen independent generating set, so that $S=\langle \mathcal{S}\rangle$.

For logical qubit $i\in [k]:=\{1,2,\dots,k\}$, we define the \emph{logical rotated Pauli observable}
\begin{equation}
\begin{aligned}
\bar M_i
&:= e^{-i\theta_i \bar Z_i/2}\,\bar X_i\,e^{+i\theta_i \bar Z_i/2} \\
&= \cos\theta_i\,\bar X_i + \sin\theta_i\,\bar Y_i,
\label{eq:logical-rotated-x}
\end{aligned}
\end{equation}
where $\bar Y_i:= i\bar X_i\bar Z_i$.
The corresponding logical magic state $|\bar\theta_i\rangle$ satisfies $\bar M_i|\bar\theta_i\rangle = |\bar\theta_i\rangle$.
The goal of magic state injection is therefore to prepare, with high fidelity and low qubit overhead, a code state that is a $+1$ eigenstate of $\bar M_i$ for one or more logical qubits in the same block.

In many surface code MSI schemes~\cite{li_magic_2015,lao_magic_2022,gidney_cleaner_2023}, state preparation is followed by rounds of syndrome extraction (SE) that detect faults and, up to a Pauli frame, project the data back into the codespace.
Pauli frame correction is optional and can be implemented in software.
Abstractly, one round of SE can be modeled as the codespace projector
\begin{equation}
\proj_\mathcal{S} \,=\, \prod_{s\in\mathcal{S}} \frac{I+s}{2}.
\label{eq:projector}
\end{equation}
Because $\bar X_i$, $\bar Y_i$, and $\bar Z_i$ all lie in $N(S)$, they commute with every stabilizer in $\mathcal{S}$, and hence
\begin{equation}
[\bar M_i,\,\proj_\mathcal{S}] = 0.
\label{eq:commute-with-projector}
\end{equation}
Therefore, if the prepared state is a $+1$ eigenstate of desired logical rotated Pauli observables $\set{\bar M_i}$, then projection into the codespace preserves that eigenvalue.
This observation is the starting point of our in-situ injection scheme: the design problem reduces to characterizing product-state initializations whose projection yields the target logical rotated observables.

\subsection{High-rate qLDPC codes}

Our main examples are the $[[144,12,12]]$ bivariate bicycle (BB) code~\cite{bravyi_high-threshold_2024} and the $[[225,9,4]]$ hypergraph product (HGP) code~\cite{xu_constant-overhead_2023}.
These are two canonical CSS qLDPC code families in the recent literature, and both encode multiple logical qubits in a single block, making them natural testbeds for simultaneous magic state injection on multiple logical qubits.
They also admit explicit syndrome extraction circuits, which is essential because the performance of our scheme depends not only on the code itself, but also on the chosen logical Pauli representatives and on the syndrome extraction schedule.
We defer the detailed constructions of the logical representatives and syndrome extraction circuits used in this work to \cref{app:high-rate-qLDPC-codes}.

\section{General k-injection}
\label{sec:k-injection} 
In this section, we present our general $ k $-qubit injection scheme for arbitrary codes. The first step is to
find a set of conjugate logical Pauli representatives $ \set{\bar{X}_i', \bar{Z}_i'} $ such that the support of all of the representatives can be partitioned into two sets, $\mathcal{Q}_L$ and $\mathcal{Q}_P$, where $|\mathcal{Q}_L| = k$. 
Moreover, for every qubit $j$ in $\mathcal{Q}_P$, there is a fixed single-qubit Pauli operator $P_j$ such that
\begin{align}
   \forall i,  (\bar{X}_i')_j, (\bar{Z}_i')_j \in \{ I, P_j \}.
\end{align}
In other words, the representatives \textit{locally commute} on qubits in $\mathcal{Q}_P$.
Given such logical representatives, one can prepare an arbitrary target state on the $k$~qubits in~$\mathcal{Q}_L$ and prepare the qubits in $\mathcal{Q}_P$ in single-qubit states stabilized by $\{ P_j \}_{j \in \mathcal{Q}_P}$.
Consequently, the logical representatives are prepared in the target state.
One would then project into the codespace by measuring all of the stabilizers of the code which, in the absence of noise, will not disturb the prepared logical representatives. 

We describe how to find such a basis using a stabilizer-cleaning argument in \Cref{app:k-injection-details}, however there are equivalent approaches that achieve the same result by considering the binary symplectic matrices formed by the stabilizer generators and logical operators~\cite{gottesman_stabilizer_1997,cowtan_parallel_2025,shi_stabilizer_2025}.
We explain in \cref{app:k-injection-details} why the stabilizer-cleaning preserves the encoded logical information. 

The stabilizer-cleaning produces a new set of logical representatives $ \mathcal{L}' := \{ \bar{X}_i', \bar{Z}_i' \}_{i=1}^{k} $ together with an updated stabilizer generator set $ \mathcal{S}' $. The representatives in $ \mathcal{L}' $ are supported on $ k $ \emph{carrier qubits}, denoted by $\mathcal{Q}_L = \set{Q_i}_{i=1}^k$, while the remaining \emph{peeled qubits}, denoted by $\mathcal{Q}_P$, are fixed by the weight-$1$ Pauli generators in $ \mathcal{S}' $.
Since $ \mathcal{L}' $ is again a valid conjugate logical Pauli basis, there exists a Clifford circuit $C$ on the carrier qubits that maps the standard single-qubit Paulis on $\mathcal{Q}_L$ to the reduced logical representatives:
\begin{equation}
C X_{Q_i} C^\dagger = \bar{X}_i',
\qquad
C Z_{Q_i} C^\dagger = \bar{Z}_i'.
\label{eq:tomas-step-clifford-circuit}
\end{equation}

In summary, the whole $ k $-qubit injection scheme is:
\begin{enumerate}
    \item Prepare the $ k $ carrier qubits $\set{Q_i}$ in the $+1$ eigenstates of $ M_{Q_i} $ for all $ i \in [k] $.
    \item Apply the Clifford circuit $C$ on the carrier qubits. (Such Clifford could be free. See discussion in \Cref{app:k-injection-details}).
    \item Prepare the $ n-k $ peeled qubits in the eigenstates of the single-qubit Paulis in $ \mathcal{S}' $.
    \item Measure the stabilizers of the original code, i.e., $ \mathcal{S} $.
\end{enumerate}

In this way, we inject $ k $ logical magic states encoded by $ \mathcal{L} $ simultaneously.

\subsection{Analysis of injection error rate}

Magic state injection schemes rely on encoding a set of qubits into the target codespace via syndrome extraction. This renders all stabilizer outcomes non-deterministic in the first SE round, providing no reliable check for post-selection or for suppressing preparation faults. While preparation faults on the magic-state qubits can never be detected, a careful choice of initial states can allow fixed stabilizers to detect faults on other qubits initialized in the $\ket{0}$ or $\ket{+}$ states. A general construction aimed at injecting the maximum number of logical qubits ignores this possibility. Consequently, faults occurring during state preparation and the first SE round can contribute directly to logical injection errors.

We observe this behavior explicitly in circuit-level simulation (assuming the Clifford $C$ is free). For the \( [[144,12,12]] \) BB code, when injecting all \(k=12\) logical qubits under the uniform depolarizing noise model of \Cref{sec:noise-model} with physical error rate \(p=10^{-3}\), the injection error rate (per logical qubit) is \(0.0422\), which is \(42.2\times\) of the physical error rate. 
This shows that utilizing first-round stabilizers to reduce the injection error rate might be necessary.

To this end, in the next section, we will introduce an improved injection scheme to reduce the injection error rate.

\section{Noise-aware injection scheme}
\label{sec:improved-injection-scheme}

In this section, we modify the scheme in \Cref{sec:k-injection} to improve the injection error rate via post-selection, inspired by Ref.~\cite{li_magic_2015}. The key idea is to directly prepare the state in the $+1$ eigenstate of the desired logical observables $ \bar M_i $,
under the \emph{original} logical representatives $\mathcal{L}$.
This method has two advantages: it removes the use of Clifford rotation, and enables post-selection using stabilizers with deterministic outcome in the first SE round. We call these \emph{fixed stabilizers}. Once an error flips the measurement outcome of any fixed stabilizer, we discard the shot and thus decrease the logical error rate of the kept shots.
Such fixed stabilizers do not exist in the general $k$-injection scheme since all the stabilizers in $\mathcal{S}$ anticommute with the prepared single-qubit Pauli states by construction.

An example of injecting two logicals $i$ and $j$ using the improved injection scheme is shown in \Cref{fig:inject-two-logicals}(d).
In the following, we first introduce sufficient conditions to prepare the target state.
Then, we introduce how to post-select on the measurement outcomes of fixed stabilizers to improve the injection error rate.

\begin{figure*}[htb]
    \centering
    \includegraphics[width=\textwidth]{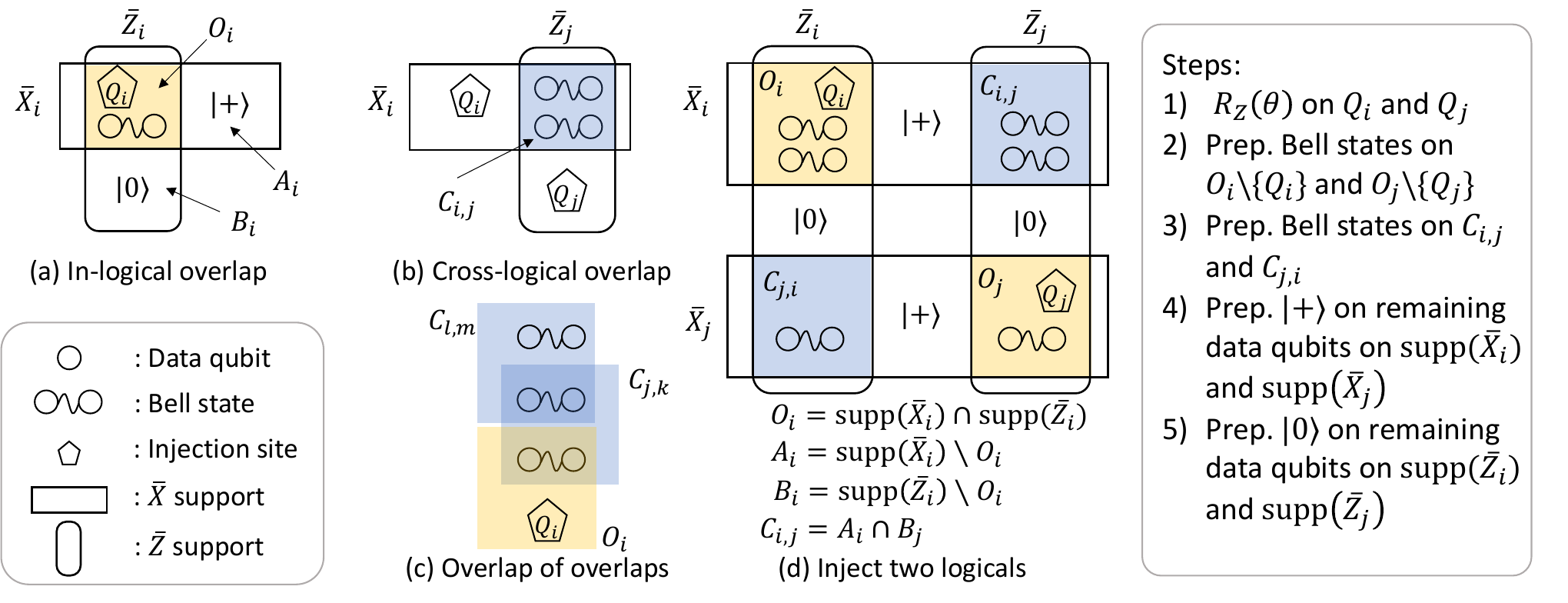}
    \caption{High-level summary of this work.
    (a) In-logical overlap: Typical injection scheme for a single logical qubit.
    For surface code, we perform $ R_z(\theta) $ on the injection site $ Q_i $,
    which is the single-qubit sitting in the overlap of its logical $ X $ and $ Z $ supports. But for qLDPC codes, $ O_{i} := \supp(\bar X_i) \cap \supp(\bar Z_i) $ has cardinality above 1.
    The solution is to prepare Bell states on $ O_{i} \setminus \{Q_i\} $.
    (b) Cross-logical overlap (remainder overlap): When injecting two logical qubits, $ i $ and $ j $, the overlap of their supports, i.e., $ C_{i,j} $, may not be empty.
    The solution is to prepare Bell states on $ C_{i,j} $.
    (c) Overlap of overlaps: the above-mentioned types of overlaps will overlap with each other.
    The solution is to identify a \emph{compatible pairing} (\Cref{def:generic-compatible}) s.t. the overlaps can be filled with Bell states.
    (d) An example of injecting two logical qubits, $ i $ and $ j $.
    Prepare Bell states on the overlaps,
    and initialize the rest of the qubits in the appropriate Pauli eigenbases.
    }
    \label{fig:inject-two-logicals}
\end{figure*}

The challenge in preparing the desired eigenstate of $ \bar M_i $'s is to deal with the overlaps of the logical $ X $ and $ Z $ supports due to the nature of qLDPC codes.
The central idea is to resolve these overlaps by Bell state preparation on carefully chosen qubit pairs.

We formalize as follows three overlap types, in-logical overlap, cross-logical overlap, and overlap of overlaps, as shown in \Cref{fig:inject-two-logicals}(a,b,c).
For each injected logical index $i\in\cI\subseteq [k]$, define the in-logical overlap $O_i$
as the overlap of the logical $ X $ and $ Z $ supports:
\begin{equation}
O_i \;:=\; \supp(\bar X_i)\cap\supp(\bar Z_i).
\end{equation}
The \emph{$X$-remainder} and \emph{$Z$-remainder} are defined as the qubits outside the in-logical overlap:
\begin{equation}
A_i \;:=\; \supp(\bar X_i)\setminus O_i, \quad
B_i \;:=\; \supp(\bar Z_i)\setminus O_i.
\end{equation}
Because $\{\bar X_i,\bar Z_i\}=0$ in CSS codes, $|O_i|$ is odd.
Choose an \emph{injection site} $Q_i\in O_i$, where the qubit $Q_i$ is prepared in the $+1$ eigenstate of $M_{Q_i}$. The remaining qubits in the in-logical overlap are denoted as $O_i\setminus\{Q_i\}$.
To treat in-logical and cross-logical overlaps uniformly, we define
\begin{equation}
    C_{i,j} =
    \begin{cases}
        A_i\cap B_j, & i\neq j,\\
        O_i\setminus\{Q_i\}, & i=j.
    \end{cases}
    \label{eq:unified-Cij}
\end{equation}
and
\begin{equation}
C := \bigcup_{i,j\in\cI} C_{i,j}.
\end{equation}

The goal of preparing
the state in the $+1$ eigenstate of the desired logical rotated Pauli observables $ \set{\bar M_i} $ is to satisfy the following logical constraints:
\begin{equation}
\bar M_i|\psi_{\mathrm{in}}\rangle = |\psi_{\mathrm{in}}\rangle,\quad \forall i\in\cI.
\label{eq:goal-pre-se-state}
\end{equation}

Since $[\bar M_i,\proj_{\mathcal S}]=0$, projection by ideal SE preserves the logical constraints:
\begin{equation}
\bar M_i|\psi_{\mathrm{out}}\rangle = |\psi_{\mathrm{out}}\rangle,\quad \forall i\in\cI.
\end{equation}

The goal in \Cref{eq:goal-pre-se-state} holds
if the following two \emph{sufficient} conditions hold:

\begin{definition}[Injection site independence]
    \label{def:injection-site-independence}
\begin{equation}
Q_i \notin \supp(\bar X_j)\cup \supp(\bar Z_j) \qquad \forall i\neq j \text{ in }\cI.
\end{equation}
\end{definition}

\begin{definition}[Compatible pairing]
    \label{def:generic-compatible}
    A pairing $\cP=\{\{u_t,v_t\}\}_{t=1}^T$ of $C$ is \emph{compatible} if for every $i,j\in\cI$ and every pair $\{u,v\}\in\cP$,
    \begin{equation}
    u\in C_{i,j}\ \Longleftrightarrow\ v\in C_{i,j}.
    \end{equation}
    Equivalently, each pair is either fully inside or disjoint from every $C_{i,j}$.
\end{definition}

\begin{theorem}[Sufficient conditions for improved injection scheme]
\label{thm:unified-sufficient-condition}
Assume injection site independence (\Cref{def:injection-site-independence}) and a compatible pairing (\Cref{def:generic-compatible}).
Prepare the state by:
\begin{enumerate}
    \item Bell state on every pair in $\cP$;
    \item $\ket{+}$ on $A_i\setminus C$ for each $i\in\cI$;
    \item $\ket{0}$ on $B_i\setminus C$ for each $i\in\cI$;
    \item the $+1$ eigenstate of $M_{Q_i}$ on each $Q_i$.
\end{enumerate}
Then \Cref{eq:goal-pre-se-state} holds, i.e.,
$
\bar M_i|\psi_{\mathrm{in}}\rangle = |\psi_{\mathrm{in}}\rangle
$, $\forall i\in\cI$.
\end{theorem}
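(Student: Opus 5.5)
The approach is to factor each logical observable through the injection site and show that the prepared product state is stabilized by everything except the single-qubit factor on $Q_i$. Fix $i\in\cI$ and write $R_i := \prod_{q\in \supp(\bar X_i)\setminus\{Q_i\}} X_q$ and $R_i' := \prod_{q\in \supp(\bar Z_i)\setminus\{Q_i\}} Z_q$. Since the code is CSS, $\bar X_i = X_{Q_i}R_i$ and $\bar Z_i = Z_{Q_i}R_i'$. The overlap of $R_i$ and $R_i'$ is $O_i\setminus\{Q_i\}$, which has even size because $|O_i|$ is odd, so $R_i$ and $R_i'$ commute. Using $iXZ=Y$ on $Q_i$,
\begin{equation}
\bar Y_i = i\bar X_i\bar Z_i = Y_{Q_i}R_iR_i', \qquad \bar M_i = M_{Q_i}\,R_iR_i'.
\end{equation}
Hence it suffices to prove three facts: $R_i|\psi_{\mathrm{in}}\rangle=|\psi_{\mathrm{in}}\rangle$, $R_i'|\psi_{\mathrm{in}}\rangle=|\psi_{\mathrm{in}}\rangle$, and that $M_{Q_i}$ fixes $|\psi_{\mathrm{in}}\rangle$. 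The last holds because $Q_i$ is prepared in the $+1$ eigenstate of $M_{Q_i}$ as an unentangled tensor factor.

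First I will check that the preparation is well defined, i.e. the four prescribed sets are disjoint. $Q_i\notin C$ by definition of $C_{i,i}$. By injection site independence (\Cref{def:injection-site-independence}), $Q_i$ lies in no $A_j$ or $B_j$ with $j\neq i$. A qubit in $(A_i\setminus C)\cap(B_j\setminus C)$ would lie in $A_i\cap B_j=C_{i,j}\subseteq C$ when $j\neq i$, which is a contradiction; for $j=i$ the sets $A_i$ and $B_i$ are disjoint by definition.

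Next I split $\supp(R_i)=A_i\cup(O_i\setminus\{Q_i\})$ into qubits outside $C$ and qubits inside $C$.
\begin{itemize}
\item A qubit $q\notin C$ lies in $A_i\setminus C$, since $O_i\setminus\{Q_i\}=C_{i,i}\subseteq C$. Such a qubit is in $|+\rangle$, so $X_q$ fixes it.
\item For qubits in $C$, I want the following: whenever $u\in\supp(R_i)\cap C$, $u$ lies in some $C_{i,j}$ ($j$ arbitrary, including $j=i$). Compatibility (\Cref{def:generic-compatible}) then puts the partner $v$ in $C_{i,j}$ as well. Since $C_{i,j}\subseteq \supp(\bar X_i)\setminus\{Q_i\}$, the operator $R_i$ restricted to the pair is $X_uX_v$, which stabilizes the Bell pair.
\end{itemize}
The symmetric argument handles $R_i'$ via the sets $C_{j,i}\subseteq B_i$ and $C_{i,i}$, on which $R_i'$ acts as $Z_uZ_v$. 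For pairs in $C_{i,i}$, $R_iR_i'$ acts as $X_uX_vZ_uZ_v$, which is again a stabilizer of the Bell pair. Here I take the Bell state to be $(|00\rangle+|11\rangle)/\sqrt2$, stabilized by $XX$ and $ZZ$. Compatibility also guarantees that no pair is cut by $A_i$ against $B_i$, because these sets are disjoint.

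The step I expect to be the main obstacle is the claim in the second bullet. The delicate case is $u\in A_i\cap O_b$ with $b\neq i$: such a $u$ lies in $C$ only through $C_{b,b}$, and $C_{i,b}=A_i\cap B_b$ excludes $O_b$. I would first try to rule this case out, or absorb it, by enlarging the cross-overlap definition to $A_i\cap\supp(\bar Z_b)$. Failing that, I would show that compatibility with every $C_{a,b}$ forces the partner of $u$ to lie in $A_i$ as well. If neither works, the theorem needs this additional overlap to be covered by the $C_{i,j}$, and I would state that explicitly as part of the hypotheses. Once the claim is settled, combining the three stabilization facts gives $\bar M_i|\psi_{\mathrm{in}}\rangle=M_{Q_i}R_iR_i'|\psi_{\mathrm{in}}\rangle=|\psi_{\mathrm{in}}\rangle$ for every $i\in\cI$.
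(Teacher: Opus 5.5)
Your reduction is correct and is essentially the paper's route. The paper also shows that $X_{A_i}$ and $Z_{B_i}$ stabilize $\ket{\psi_{\mathrm{in}}}$ by arguing that Bell pairs never straddle these sets, and then cancels the pairs inside $O_i\setminus\{Q_i\}$. Your bundling into $R_i,R_i'$ just avoids its $(-1)^t$ bookkeeping.

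A small slip first. $\bar M_i=M_{Q_i}R_iR_i'$ is not an operator identity, because $\bar X_i=X_{Q_i}R_i$ carries no $R_i'$. It holds only on vectors fixed by $R_i'$, which is all you use.

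The real problem is the step you flag. It is a genuine gap, and it cannot be closed under the stated hypotheses. The paper's Step 3 simply asserts that every $u\in A_i\cap C$ lies in some $C_{i,j}=A_i\cap B_j$, which overlooks exactly your case $u\in A_i\cap(O_b\setminus\{Q_b\})$.

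Here is a counterexample on six qubits. Take $\bar X_1=X_{Q_1}X_wX_{w'}X_a$, $\bar Z_1=Z_{Q_1}Z_wZ_{w'}$, $\bar X_2=X_{Q_2}X_aX_b$ and $\bar Z_2=Z_{Q_2}Z_aZ_bZ_w$.
\begin{itemize}
\item These form a valid conjugate CSS logical basis. $\bar X_i$ meets $\bar Z_i$ in $3$ qubits, $\bar X_1$ meets $\bar Z_2$ in $\{a,w\}$, and $\bar X_2$ misses $\bar Z_1$. A realizing code is the one whose stabilizers are all $Z$-type operators commuting with $\bar X_1,\bar X_2$.
\item The sets are $O_1=\{Q_1,w,w'\}$, $A_1=\{a\}$, $B_1=\emptyset$, and $O_2=\{Q_2,a,b\}$, $A_2=\emptyset$, $B_2=\{w\}$. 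Injection site independence holds.
\item The overlaps are $C_{1,2}=C_{2,1}=\emptyset$, $C_{1,1}=\{w,w'\}$ and $C_{2,2}=\{a,b\}$. So $\{a,b\},\{w,w'\}$ is the unique compatible pairing, and the signature test of \Cref{sec:maximal-injectable-sets} accepts this configuration.
\item Nevertheless $\bar M_1\ket{\psi_{\mathrm{in}}}=X_a\ket{\psi_{\mathrm{in}}}$. This is orthogonal to $\ket{\psi_{\mathrm{in}}}$, because $X_a$ sends the Bell pair on $\{a,b\}$ to $(\ket{01}+\ket{10})/\sqrt2$.
\end{itemize}
So the statement is false with \Cref{def:generic-compatible} as written. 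Your second fallback, that compatibility forces the partner of $u$ into $A_i$, cannot succeed: here $u=a$ and its partner $b\notin A_1$.

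The repair is your first/third option, made symmetric. Require every pair in $\cP$ to be either contained in or disjoint from each $\supp(\bar X_j)$ and each $\supp(\bar Z_j)$, $j\in\cI$. Equivalently, redefine $C_{i,j}:=\supp(\bar X_i)\cap\supp(\bar Z_j)$ for $i\neq j$, which now also covers $A_i\cap O_j$ and $O_i\cap B_j$.
\begin{itemize}
\item By injection site independence this leaves the set $C$ unchanged.
\item Every qubit of $C$ lies in at least one $X$-support and one $Z$-support, so the two formulations agree.
\item Your one-sided enlargement $A_i\cap\supp(\bar Z_b)$ repairs $R_i$ but not $R_i'$, which needs $\supp(\bar X_b)\cap B_i$ as well.
\end{itemize}
Under this hypothesis, any pair meeting $\supp(R_i)$ lies in $\supp(\bar X_i)\setminus\{Q_i\}$, since $Q_i\notin C$. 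Hence $R_i$ restricted to $C$ is a product of $X_uX_v$ over Bell pairs, and likewise $R_i'$ is a product of $Z_uZ_v$. The rest of your argument then completes the proof as written.
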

The proof is given in \Cref{sec:proof-all-in-one}. To satisfy injection site independence and compatible pairing,
we might not be able to inject the full set of $ k $ logicals.
We discuss how to find maximal injectable sets in \Cref{sec:maximal-injectable-sets}.


After fixing the preparation on qubits in
$\supp(\bar X_i)\cup \supp(\bar Z_i)$ for all injected logicals,
the remaining design freedom lies in how we initialize qubits outside these supports.
We use this freedom to increase the protection of the injected logical observables in the first syndrome extraction round.

The key mechanism is post-selection on \emph{fixed stabilizers}.
Our initialization pattern ensures that in the absence of any noise, all the fixed stabilizers deterministically yield a $+1$ outcome.
In the presence of noise, if any fixed stabilizer is measured as $-1$, we discard the shot.
Thus, the usefulness of a given initialization pattern is determined by which first-order error mechanisms can be detected through flipped fixed stabilizers.

Our objective is to choose the remaining qubit initializations so as to maximize the fixed-stabilizer protection of the injected logical supports.
To keep the search space manageable and to avoid introducing additional first-round complexities, we restrict these extra choices to single-qubit Pauli-basis preparations.

For qLDPC codes, this optimization problem is substantially more challenging than in surface code injection schemes.
In \Cref{sec:max-protection-injected-supports}, we formulate the search for a high-protection initialization pattern as a mixed-integer linear programming (MILP) problem and solve it with off-the-shelf solvers.

\section{Evaluation}
\label{sec:evaluation}

In this section, we evaluate the improved injection scheme under circuit-level noise. Our main goals are to validate the first-order analytical logical noise model (see \cref{sec:first-order}) against Monte Carlo simulation, to characterize how the injection error scales with the number of injected logical qubits, and to assess the sensitivity of the results to both the noise model and the optimization objective. We use the $[[144,12,12]]$ BB code as the main testbed and report four metrics: the total post-selected injection error rate, the correlated injection error rate, the discard rate, and the approximate injection error rate per logical qubit, when correlated errors are negligible. We then use the $[[225,9,4]]$ HGP code to test the generality of the proposed method, and compare against the best known surface code injection scheme to place our current results in context.

\subsection{Injection procedure}

We evaluate the improved injection scheme using the following simulation protocol:
\begin{enumerate}
    \item Prepare the state as in \Cref{thm:unified-sufficient-condition}.
    \item Perform a round of SE.
    \item Post-select on the fixed stabilizers. If the first-round measurement outcomes of the fixed stabilizers are not all $+1$, we discard the shot and restart the injection procedure. The corresponding detectors are called \emph{fixed-stabilizer detectors}.
    \item Perform a second round of SE. We compare the full stabilizer outcome of the second round with that of the first round, and discard the shot if they disagree. The corresponding detectors are called \emph{round-parity detectors}.
    \item Perform a further $r_2$ rounds of SE and decode using all retained syndrome information.
\end{enumerate}

This protocol is chosen because the dominant contribution to the injection error arises from faults before and during the first SE round. 
Additional SE rounds can further suppress higher-order contributions, while potentially introducing more memory faults. 
As a point of clarity, all stabilizers are measured in Step 2, but the shot is discarded at this stage only if the subset of fixed stabilizers do not give a $+1$ outcome. The discarding in Step 4 is a result of comparing \textit{all} stabilizer measurements with their corresponding result in Step 2.
In our simulations, taking one extra decoding round beyond the two post-selection rounds captures the relevant behavior without materially changing the leading-order injection error rate. Accordingly, throughout this section we set $r_2 = 1$ or $ 2 $ for the sake of simulation efficiency. We use \texttt{stim} to simulate. Since \texttt{stim} only supports Clifford operations, we follow~\cite{gidney_cleaner_2023,sahay_fold-transversal_2025} and inject $\ket{\bar{Y}}$ states.

\subsection{Noise model}
\label{sec:noise-model}

We employ the depolarizing noise model used in prior analyses of surface code injections~\cite{li_magic_2015,lao_magic_2022} for our simulations as summarized in \Cref{tab:li_noise_channels,tab:li_noise_model}. A qubit initialized in a $+1$ Pauli eigenstate may be flipped to the corresponding $-1$ eigenstate with probability $p_{\rm IN}$. A measurement outcome may be flipped with probability $p_{\rm M}$. An idling qubit undergoes a single-qubit depolarizing channel with error probability $p_{\rm IDLE}$. A single-qubit gate is followed by a single-qubit depolarizing channel with probability $p_{\rm 1Q}$, i.e., one of $\{X,Y,Z\}$ is applied uniformly at random with probability $p_{\rm 1Q}/3$. A two-qubit gate is followed by a two-qubit depolarizing channel with error probability $p_{\rm 2Q}$, i.e., one of the 15 nontrivial Pauli errors in $\{I,X,Y,Z\}^{\otimes 2} \setminus \{II\}$ is applied with probability $p_{\rm 2Q}/15$.

We consider two concrete parameter settings. The first is the \emph{uniform depolarizing noise model}, where $p_{\rm IDLE} = p_{\rm IN} = p_{\rm M} = p_{\rm 1Q} = p_{\rm 2Q} = p$. The second is the \emph{hardware-motivated asymmetric noise model}, where error rates of the single-qubit gates are 10\% of the two-qubit gates, i.e., $p_{\rm IDLE} = p_{\rm IN} = p_{\rm M} = p_{\rm 1Q} = 0.1p_{\rm 2Q} = 0.1p$. Unless otherwise stated, the results below use one of these two models.

We merge the state-preparation circuit and the first SE round whenever possible. This is important because idling errors before the first SE round can increase undetectable first-order initialization errors.

\subsection{First-order analytical results}
\label{sec:first-order}

We begin with a first-order analytical model for the BB code injection performance. Let $\cJ \subseteq \cI$ be a subset of the injected logical qubits, and let $p_\cJ$ denote the post-selected injection error rate when the incorrect logicals are \emph{exactly} those in $\cJ$:
\begin{equation}
p_\cJ := \Pr(\text{incorrect logicals are exactly } \cJ).
\label{eq:p-cJ}
\end{equation}

The \emph{total} injection error rate is then
\begin{equation}
    p_{\cI}^{\rm tot} := \sum_{\substack{\forall \cJ \subseteq \cI \\ \cJ \neq \emptyset}} p_\cJ,
    \label{eq:p-cI}
\end{equation}
and the \emph{correlated} injection error rate is
\begin{equation}
    p^{\text{corr}}_{\cI} := \sum_{\substack{\forall \cJ \subseteq \cI \\ |\cJ| > 1}} p_\cJ.
    \label{eq:p-corr-cI}
\end{equation}
The first metric captures the overall probability of a logical injection failure after post-selection, while the second isolates the component involving multiple logical qubits, which has no analogue in single logical surface code injection.

Most injection errors occur before and during the first SE round. Following~\cite{li_magic_2015,lao_magic_2022}, we analyze the first-order contribution to the injection error rate.
In other words, we consider weight-one fault mechanisms that occur before or during the first SE round.

An \emph{injection configuration} is specified by the injected logical set $\cI$ and the corresponding injection sites $\{Q_i\}_{i\in\cI}$. An \emph{initial configuration} further specifies the initializations used for that injection configuration. We use the optimization procedure in \Cref{sec:max-protection-injected-supports}
to find, for each feasible injection configuration, the initial configuration that minimizes $p_{\cI}^{\rm tot}$. By enumerating the feasible configurations on the $[[144,12,12]]$ BB code, we find solutions for $|\cI|=1$ through 6.

\begin{table}[htb]
    \centering
    \resizebox{\columnwidth}{!}{
\begin{tabular}{ccccccccc}
    \toprule
    $|\mathcal{I}|$ & $\frac{p_{\cI}^{\rm tot}}{|\mathcal{I}|}$ & $p_{\cI}^{\rm tot}$ & $\frac{p^{\text{corr}}_\cI}{|\mathcal{I}|}$ & $p^{\text{corr}}_\cI$ & $\frac{p^{\text{corr}}_\cI}{p_{\cI}^{\rm tot}}$ & $|F|$ & $r_{\rm disc}^{\rm UP}(0.001)$ & $|\mathcal{P}|$ \\
    \midrule
    1 & 1.40$p$ & 1.40$p$ & 0 & 0 & 0 & 26 & 0.9298 & 0 \\
    2 & 1.43$p$ & 2.87$p$ & 0 & 0 & 0 & 32 & 0.9315 & 0 \\
    3 & 1.53$p$ & 4.60$p$ & 0 & 0 & 0 & 31 & 0.9315 & 2 \\
    4 & 1.57$p$ & 6.27$p$ & 0.02$p$ & 0.07$p$ & 0.01 & 32 & 0.9318 & 3 \\
    5 & 6.04$p$ & 30.20$p$ & 0.96$p$ & 4.80$p$ & 0.16 & 28 & 0.9311 & 5 \\
    6 & 13.76$p$ & 82.53$p$ & 1.77$p$ & 10.60$p$ & 0.13 & 18 & 0.9281 & 10 \\
    \bottomrule
\end{tabular}
}
\caption{
Under the uniform depolarizing noise model, first-order analytical results for the best BB code initial configuration found at each $|\mathcal{I}|$. Here $p_{\cI}^{\rm tot}$ is the total injection error rate defined in \Cref{eq:p-cI}; $p^{\text{corr}}_\cI$ is the correlated injection error rate defined in \Cref{eq:p-corr-cI}; $|F|$ is the number of fixed stabilizers in the first SE round; $r_{\rm disc}^{\rm UP}$ is the union proxy for the discard rate defined in \Cref{eq:rdisc_up_def}; and $|\mathcal{P}|$ is the number of Bell states. When correlated errors are negligible, $p_{\cI}^{\rm tot}/|\mathcal{I}|$ can be interpreted as an approximate injection error rate per logical qubit.
}
\label{tab:ana-error-rate}
\end{table}

The results are summarized in \Cref{tab:ana-error-rate}. For $|\cI|=1,2,3$, the best configurations have no first-order correlated errors. For $|\cI|=4$, the correlated component remains small: $p^{\rm corr}_{\cI}/p_{\cI}^{\rm tot}\approx 1\%$. In this regime, $p_{\cI}^{\rm tot}/|\cI|$ is a meaningful approximation to the injection error rate per logical qubit, and it lies between $1.40p$ and $1.57p$. As a reference point, Li's surface code injection scheme has first-order error rate $46/15\,p \approx 3.07p$ under the same uniform depolarizing model~\cite{li_magic_2015}. 
In contrast, once $|\cI|$ increases to 5 or 6, both the total injection error and the correlated error component rise sharply. For these larger-$|\cI|$ configurations, $p_{\cI}^{\rm tot}/|\cI|$ is best viewed as a summary statistic rather than a clean per logical qubit error rate.

We also derive the union proxy of the discard rate $r_{\rm disc}^{\rm UP}$ by applying union bounds on the detectable error mechanisms in the detector error model. Details are given in \Cref{sec:detector-error-model}.
Across the best configurations in \Cref{tab:ana-error-rate}, the discard proxy varies only weakly even though the number of fixed stabilizers changes noticeably. For example, from $|\cI|=5$ to $|\cI|=6$, the number of fixed stabilizers drops from 28 to 18, whereas $r_{\rm disc}^{\rm UP}$ changes only from 0.9311 to 0.9281. This indicates that the discard behavior is not determined solely by the number of fixed stabilizers. Instead, a substantial part of the discard probability comes from mechanisms that are detected only after comparing the first two SE rounds, i.e., by the round-parity detectors. We explain this behavior in \Cref{sec:explanation-similar-discard-rate}.

\subsection{Circuit-level simulation vs analytical results}

Next, we compare the first-order analysis with circuit-level Monte Carlo simulation. We choose the $|\cI|=4$ configuration from \Cref{tab:ana-error-rate}, namely $\cI=\{4,5,7,10\}$, because its correlated injection error rate remains negligible.
The resulting error breakdown is shown in \Cref{fig:LER-breakdown-4-491}. The numerical results agree closely with the analytical first-order prediction, both for the total injection error rate and for the distribution of the dominant $p_\cJ$ terms.
Contributions that are nearly invisible correspond to subsets $\cJ$ whose logical error rates are second order or higher in $p$.
Notably, the chosen configuration gives very little contribution to the correlated injection errors at first-order.

The error from initialization ($p_{\rm IN}$) only contributes to $p_{\mathcal{J}}$ where $|\mathcal{J}
| = 1$ (independent errors in \Cref{fig:LER-breakdown-4-491}) with $1\cdot p=0.001$. 
This reflects the structure of the optimized configuration: aside from the injection sites themselves, \emph{all} qubits on the injected logical supports are protected by fixed stabilizers in the first SE round.

\begin{figure}[htb]
    \centering
    \includegraphics[width=\columnwidth]{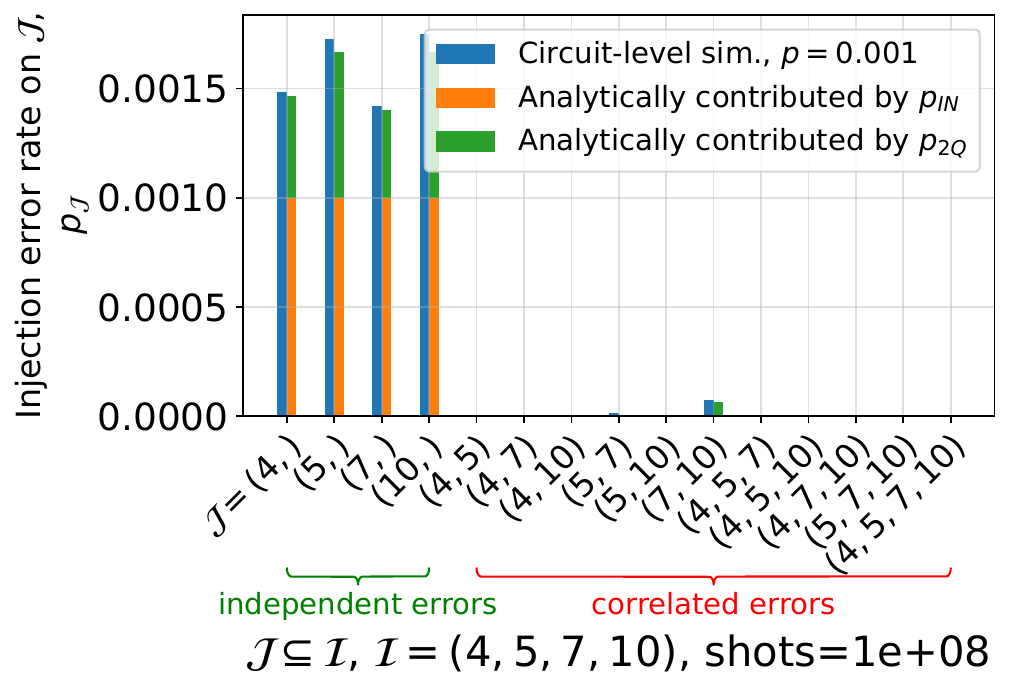}
    \caption{Breakdown of the circuit-level injection error rates for the $|\cI|=4$ BB code configuration $\cI=\{4,5,7,10\}$ under the uniform depolarizing noise model at $p=0.001$. The simulation uses $10^{8}$ shots and $r=2$ SE rounds. Each bar corresponds to a nonempty subset $\cJ\subseteq\cI$, where $p_\cJ$ is the post-selected injection error rate when exactly the logicals in $\cJ$ are incorrect. Summing all bars yields the total injection error rate $p_{\cI}^{\rm tot}=0.00647$, in close agreement with the analytical value $6.27p=0.00627$ from \Cref{tab:ana-error-rate}.}
    \label{fig:LER-breakdown-4-491}
\end{figure}

\subsection{Scalability analysis}

We now compare the first-order analysis and Monte Carlo simulation across all best BB code configurations in \Cref{tab:ana-error-rate}. The results in \Cref{fig:scalability} show close agreement for both the injection error rate and the discard rate. In particular, the low-correlation regime extends through $|\cI|=4$, where $p_{\cI}^{\rm tot}/|\cI|$ remains nearly flat, while the sharp increase at $|\cI|=5$ and 6 is reproduced by both analysis and simulation. This identifies $|\cI|=4$ as the largest practical BB code configurations we found under the uniform depolarizing model.

The empirical discard rate $r_{\rm disc}^{\rm MC}$ also tracks the union proxy $r_{\rm disc}^{\rm UP}$ closely. For clarity, \Cref{fig:scalability} only overlays the union proxy for the $|\cI|=4$ configuration, but the same agreement holds across the other best configurations as well.

\begin{figure}[htb]
    \centering
    \includegraphics[width=\columnwidth]{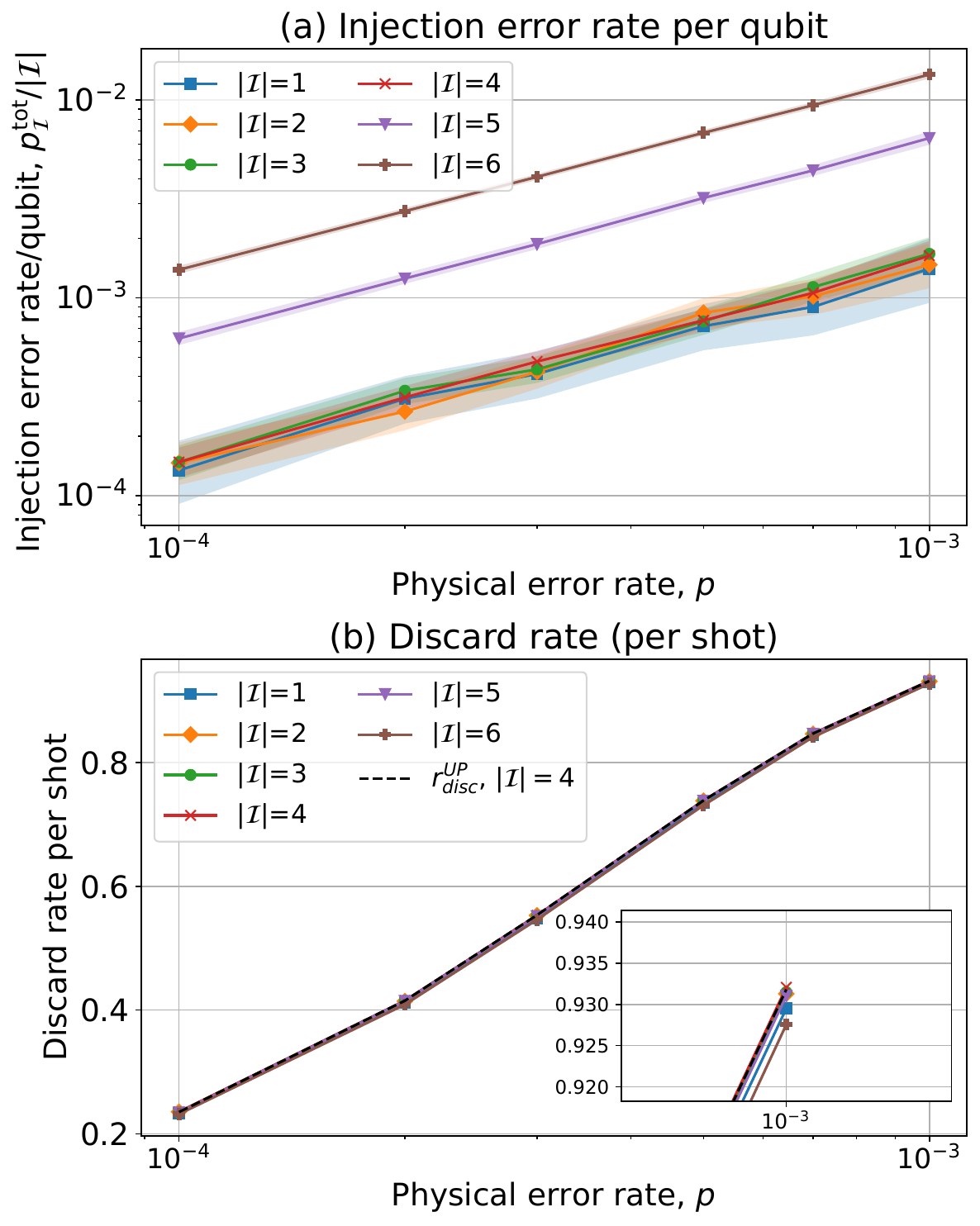}
    \caption{
    Circuit-level simulation results of $[[144,12,12]]$ BB code under the uniform depolarizing noise model.
    (a) Injection error rate per logical qubit and (b) discard rate for the best BB code configurations at each $|\cI|$ in \Cref{tab:ana-error-rate}. The union proxy $r_{\rm disc}^{\rm UP}$ matches the empirical discard rate $r_{\rm disc}^{\rm MC}$ closely; for clarity we overlay the proxy only for the $|\cI|=4$ configuration.
    }
    \label{fig:scalability}
\end{figure}

\subsection{Hardware-motivated asymmetric noise model}

We also re-optimize the injection and initial configurations under the hardware-motivated asymmetric noise model. For the $[[144,12,12]]$ BB code, the same best configurations remain optimal at each $|\cI|$ as in the uniform depolarizing model. The analytical results are listed in \Cref{tab:ana-error-rate-li}, and the corresponding circuit-level results are shown in \Cref{fig:scalability-li}.

\begin{table}[htb]
    \centering
    \resizebox{0.96\columnwidth}{!}{
\begin{tabular}{ccccccccc}
    \toprule
    $|\mathcal{I}|$ & $\frac{p_\cI^{\rm tot}}{|\mathcal{I}|}$ & $p_\cI^{\rm tot}$ & $\frac{p^{\text{corr}}_\cI}{|\mathcal{I}|}$ & $p^{\text{corr}}_\cI$ & $\frac{p^{\text{corr}}_\cI}{p_\cI^{\rm tot}}$ & $|F|$ & $r_{\rm disc}^{\rm UP}(0.001)$ & $|\mathcal{P}|$ \\
    \midrule
    1 & 0.50$p$ & 0.50$p$ & 0 & 0 & 0 & 26 & 0.8082 & 0 \\
    2 & 0.53$p$ & 1.07$p$ & 0 & 0 & 0 & 32 & 0.8100 & 0 \\
    3 & 0.63$p$ & 1.90$p$ & 0 & 0 & 0 & 31 & 0.8101 & 2 \\
    4 & 0.67$p$ & 2.67$p$ & 0.02$p$ & 0.07$p$ & 0.02 & 32 & 0.8104 & 3 \\
    5 & 2.55$p$ & 12.77$p$ & 0.51$p$ & 2.57$p$ & 0.20 & 28 & 0.8099 & 5 \\
    6 & 5.61$p$ & 33.63$p$ & 0.92$p$ & 5.50$p$ & 0.16 & 18 & 0.8070 & 10 \\
    \bottomrule
\end{tabular}
}
\caption{
Analytical results for the best BB code initial configuration found at each $|\mathcal{I}|$ under the $p_{\rm 1Q}=0.1p_{\rm 2Q}$ noise model. The notation is the same as in \Cref{tab:ana-error-rate}. When correlated errors are negligible, $p_{\cI}^{\rm tot}/|\mathcal{I}|$ can be interpreted as an approximate injection error rate per logical qubit.
}
\label{tab:ana-error-rate-li}
\end{table}

Compared with the uniform model, the leading coefficients are substantially smaller because initialization, measurement, idle, and single-qubit gate faults are all downweighted relative to the two-qubit gate faults. For $|\cI|\in\{1,2,3,4\}$, the approximate injection error rate drops to $0.50p$--$0.67p$, below the two-qubit gate error rate $p$. The same qualitative transition occurs at larger $|\cI|$: the low-correlation regime persists through $|\cI|=4$, while the error rate rises sharply for $|\cI|\in\{5,6\}$.

\begin{figure}[htb]
    \centering
    \includegraphics[width=\columnwidth]{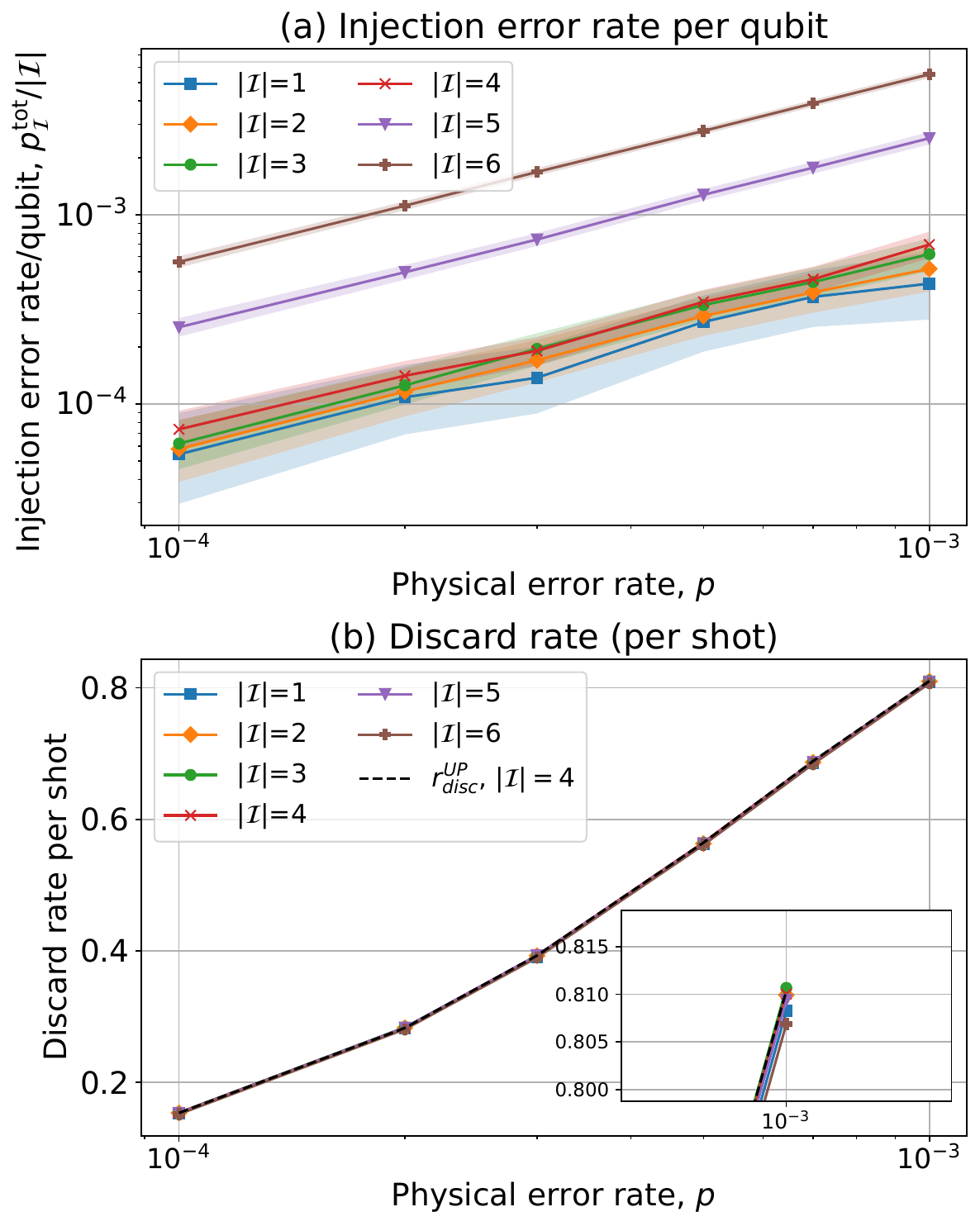}
    \caption{
        Circuit-level simulation results of $[[144,12,12]]$ BB code under the hardware-motivated asymmetric noise model.
        (a) Injection error rate per logical qubit and (b) discard rate for the best BB code configurations at each $|\cI|$ in \Cref{tab:ana-error-rate-li}.
    }
    \label{fig:scalability-li}
\end{figure}

\subsection{Comparison of optimization objectives}

\begin{figure}[htb]
    \centering
    \includegraphics[width=\columnwidth]{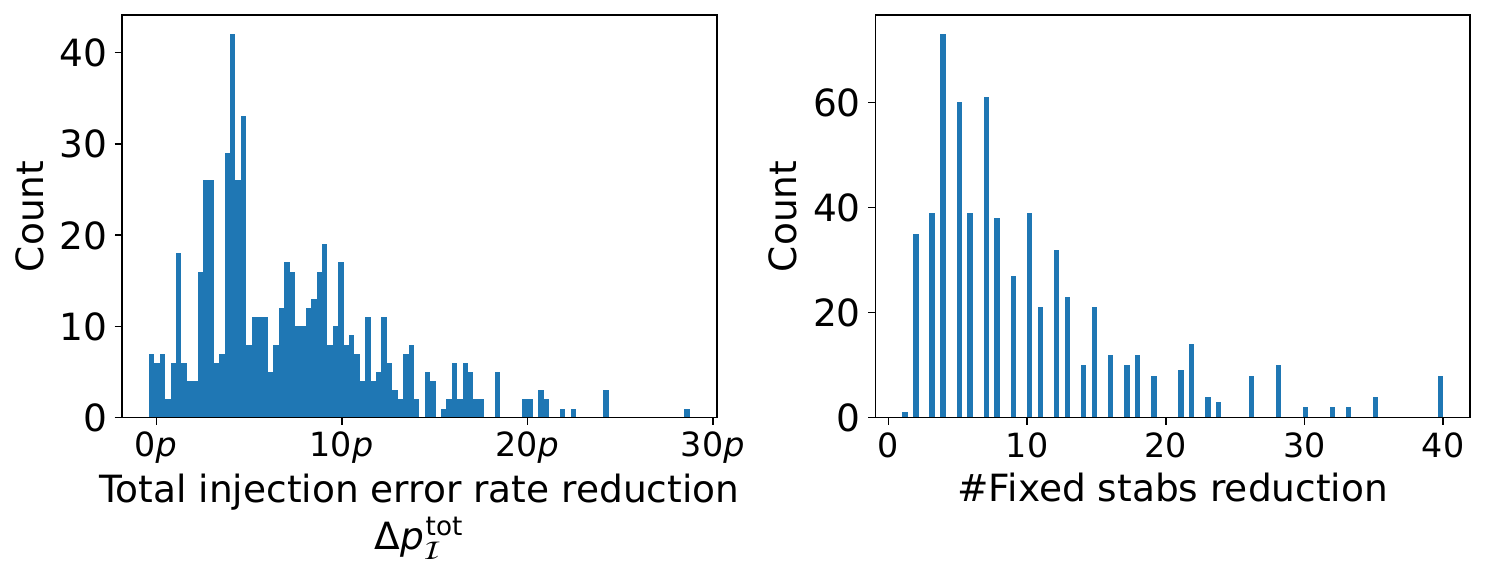}
    \caption{
        Histogram of the change in total injection error rate and in the number of fixed stabilizers when switching the objective from maximizing the total number of fixed stabilizers to maximizing the number of protected support qubits. All feasible BB code injection configurations are evaluated under the uniform depolarizing noise model.
    }
    \label{fig:cmp-fixing-methods}
\end{figure}

We next compare two optimization objectives for choosing the initial configuration: maximizing the total number of fixed stabilizers, and maximizing the number of protected qubits on the supports of the injected logical observables. For each feasible BB code injection configuration, we optimize once under each objective and compare the resulting first-order total injection error rates.

As shown in \Cref{fig:cmp-fixing-methods}, maximizing protected support qubits consistently yields lower injection error rates across all enumerated configurations, even though it often produces fewer fixed stabilizers overall. This confirms that first-order performance is governed more directly by whether faults on the injected logical supports are detected than by the raw count of fixed stabilizers. In other words, fixed stabilizers that act primarily on qubits irrelevant to the injected observables contribute little towards suppressing the dominant injection faults.

\subsection{Inject on Hypergraph Product Codes}

We also apply the construction to the $[[225,9,4]]$ HGP code of~\cite{xu_constant-overhead_2023} in order to test generality rather than to optimize performance. For the logical Pauli representatives, we use a CSS-specific construction that yields smaller-weight representatives than the generic stabilizer code construction of~\cite[Sec.~4.1]{gottesman_stabilizer_1997}, which is advantageous for injection. For syndrome extraction, we use the product-coloration circuit proposed in~\cite[Alg.~2]{xu_constant-overhead_2023}.

\begin{table}[htb]
    \centering
    \resizebox{\columnwidth}{!}{
\begin{tabular}{ccccccccc}
    \toprule
    $|\mathcal{I}|$ & $ p_{\cI}^{\rm tot} / |\mathcal{I}| $ & $ p_{\cI}^{\rm tot} $ & $ p^{\text{corr}}_\cI / |\mathcal{I}| $ & $ p^{\text{corr}}_\cI $ & $ p^{\text{corr}}_\cI / p_{\cI}^{\rm tot} $ & $|F|$ & $ r_{\rm disc}^{\rm UP}(0.0001) $ & $|\mathcal{P}|$ \\
    \midrule
    1 & 6.33$p$ & 6.33$p$ & 0 & 0 & 0 & 14 & 0.6278 & 0 \\
    2 & 6.50$p$ & 13.00$p$ & 0 & 0 & 0 & 19 & 0.6359 & 0 \\
    3 & 6.78$p$ & 20.33$p$ & 0 & 0 & 0 & 26 & 0.6413 & 0 \\
    4 & 7.25$p$ & 29.00$p$ & 0.08$p$ & 0.33$p$ & 0.01 & 25 & 0.6407 & 0 \\
    5 & 7.67$p$ & 38.33$p$ & 0.13$p$ & 0.67$p$ & 0.02 & 27 & 0.6443 & 0 \\
    6 & 8.00$p$ & 48.00$p$ & 0.17$p$ & 1.00$p$ & 0.02 & 28 & 0.6410 & 0 \\
    \bottomrule
\end{tabular}
}
\caption{
Analytical results for the best HGP code initial configuration found at each $|\mathcal{I}|$ under the uniform depolarizing noise model. Here $r_{\rm disc}^{\rm UP}$ is reported at $p=10^{-4}$ because the discard rate at $p=10^{-3}$ is above $0.9999$ for all listed configurations. The notation is otherwise the same as in \Cref{tab:ana-error-rate}.
}
\label{tab:ana-error-rate-hgp}
\end{table}

We could inject up to 6 logical qubits at once, and list the configuration with the smallest $p_{\cI}^{\rm tot}$ for each $|\mathcal{I}|$ in \Cref{tab:ana-error-rate-hgp}. Notably, none of these HGP configurations requires Bell states. The corresponding circuit-level simulation results under the uniform depolarizing model are shown in \Cref{fig:scalability-xu-hgp-li-uniform}.

\begin{figure}[htb]
    \centering
    \includegraphics[width=\columnwidth]{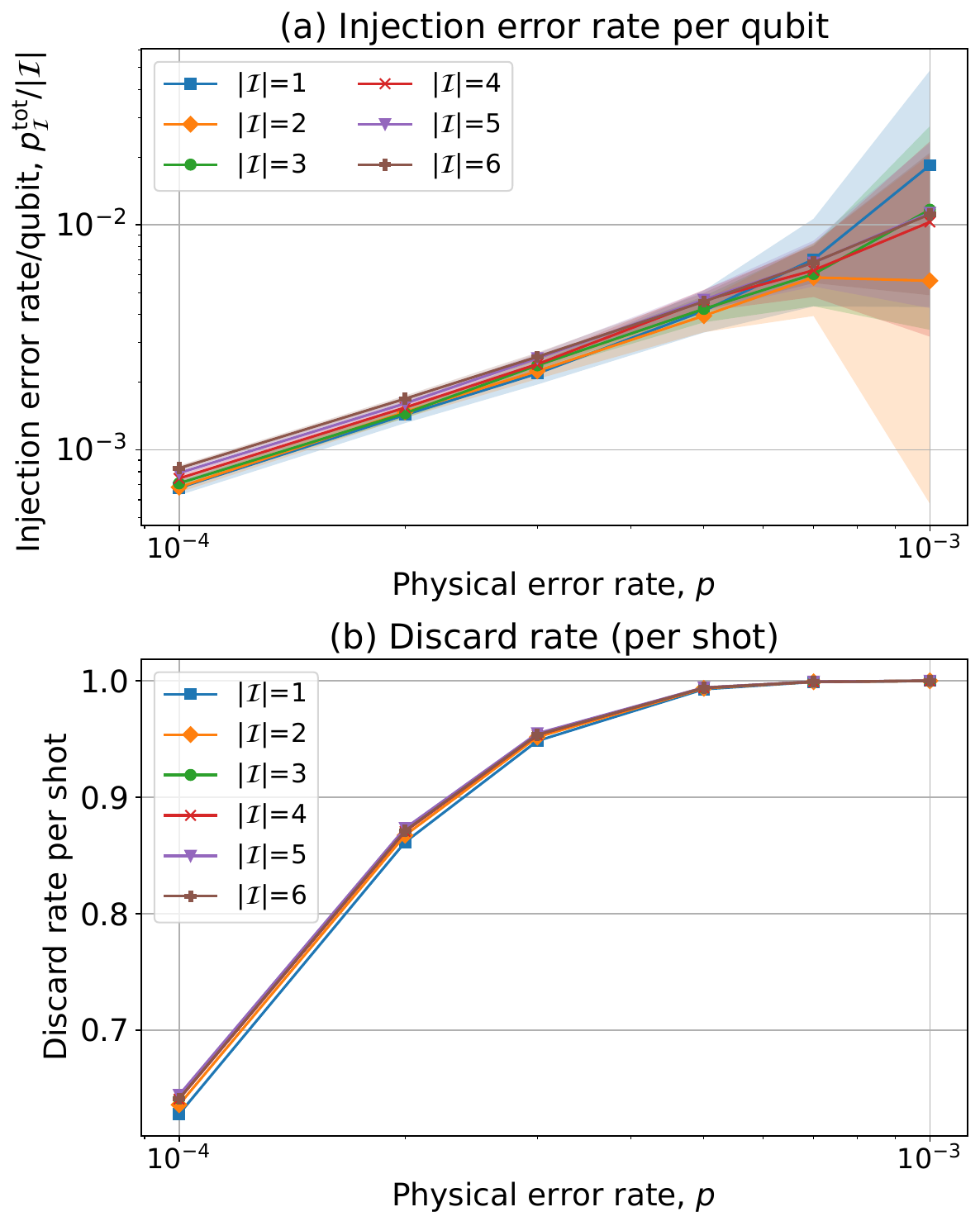}
    \caption{
        Circuit-level simulation on the HGP code $[[225,9,4]]$ from~\cite{xu_constant-overhead_2023} under uniform depolarizing noise model. (a) Injection error rate per logical qubit and (b) discard rate for the best configurations at each $|\cI|$ in \Cref{tab:ana-error-rate-hgp}. Each point uses $10^{7}$ shots.
    }
    \label{fig:scalability-xu-hgp-li-uniform}
\end{figure}

The current HGP results do not yet indicate near-term practicality: both the injection error rate and the discard rate are substantially worse than in the BB code case. We view this subsection primarily as a portability test. The present HGP study does not yet include code-specific optimization of logical representatives and syndrome extraction scheduling that tailor to injection. Since HGP codes cover a broad family of CSS qLDPC codes, these additional optimization directions make the HGP results a useful starting point rather than a final performance statement.

\subsection{Comparison with surface code injection}

To place our current BB code results in context, we compare against the hook injection scheme of Gidney~\cite{gidney_cleaner_2023}, which to our knowledge is the best-performing surface code magic state injection scheme at the raw injection level. This is a comparison of one-shot raw injection primitives, not of full cultivation or distillation pipelines. We compare four quantities: injection error rate, discard rate, discard-amortized spacetime volume, and physical qubit footprint.

Surface code hook injection is usually described as a two-step procedure: first inject a small surface code patch of distance $d_1$ using two rounds of SE, then enlarge it to a target distance $d_2$ using additional SE rounds.

\begin{table*}[htb]
    {\setlength{\tabcolsep}{8pt}
\begin{tabular}{lrrrrrr}
\toprule
$p$ & $10^{-4}$ & $2 \times 10^{-4}$ & $3 \times 10^{-4}$ & $5 \times 10^{-4}$ & $7 \times 10^{-4}$ & $10^{-3}$ \\
\midrule
Surf., $p_{\{1\}}$ & $5.3 \times 10^{-5}$ & $1.2 \times 10^{-4}$ & $1.8 \times 10^{-4}$ & $3.3 \times 10^{-4}$ & $4.8 \times 10^{-4}$ & $8.4 \times 10^{-4}$ \\
BB, $p_{\mathcal{I}}/|\mathcal{I}|$ & $1.5 \times 10^{-4}$ & $3.1 \times 10^{-4}$ & $4.6 \times 10^{-4}$ & $8 \times 10^{-4}$ & $1.1 \times 10^{-3}$ & $1.7 \times 10^{-3}$ \\
\midrule
Surf., $r_{\rm disc}$ & $0.04$ & $0.08$ & $0.11$ & $0.18$ & $0.25$ & $0.33$ \\
BB, $r_{\rm disc}$ & $0.24$ & $0.42$ & $0.55$ & $0.74$ & $0.85$ & $0.93$ \\
\midrule
Surf., $|\mathcal{I}|\cdot V^{\rm surf}$ & $5538$ & $5607$ & $5679$ & $5832$ & $5996$ & $6274$ \\
BB, $V^{\rm BB}$ & $7908$ & $10340$ & $13545$ & $23143$ & $39526$ & $88719$ \\
\midrule
Surf., \#Qubits & \multicolumn{6}{c}{$|\mathcal{I}|\cdot 2d_2^2 = 968$} \\
BB, \#Qubits & \multicolumn{6}{c}{$288$} \\
\bottomrule
\end{tabular}
}
\caption{
Comparison between Gidney's surface code hook injection and our BB code injection for the best $|\mathcal{I}|=4$ BB configuration under the uniform depolarizing noise model. We report the single logical surface code injection error rate $p_{\{1\}}$, the approximate BB injection error rate per logical qubit $p_{\mathcal{I}}/|\mathcal{I}|$, the discard rate $r_{\rm disc}$, the discard-amortized spacetime volume $V$, and the physical qubit count. For the surface code benchmark, we inject a $d_1=5$ patch using two SE rounds with post-selection on fixed stabilizers and then enlarge to $d_2=11$ with one additional SE round; decoding uses \texttt{pymatching}. For the BB code benchmark, we inject the full $[[144,12,12]]$ block using the same total number of SE rounds; decoding uses BP+OSD.
}
\label{tab:cmp-surface-bb}
\end{table*}

To choose a comparable target distance, we follow the memory-level comparison in~\cite[Fig.~2]{bravyi_high-threshold_2024}, which indicates that one $[[144,12,12]]$ BB block has memory performance similar to 12 surface code patches of distance 11 or 13. We therefore use $d_2=11$ as the target surface code distance in this benchmark. On the BB side, we benchmark the best $|\cI|=4$ configuration because its correlated injection error rate remains negligible.

Let $r_{\rm disc}$ be the discard rate under post-selection on the fixed stabilizers, $V_{\rm inj}$ the spacetime volume of the injection procedure, and $V_{\rm grow}$ the spacetime volume of the code-enlargement procedure. We define spacetime volume as ``number of physical qubits'' times ``number of two-qubit gates on the critical path.'' The discard-amortized spacetime volume is then
\begin{equation}
V = \frac{V_{\rm inj}}{1 - r_{\rm disc}} + V_{\rm grow}.
\label{eq:spacetime-volume}
\end{equation}

The comparison is summarized in \Cref{tab:cmp-surface-bb}. At present, the $[[144,12,12]]$ BB code injection scheme is not yet competitive with surface code hook injection in either injection error rate per logical qubit or discard-amortized spacetime volume. The main reason is that the entire BB block must be injected at once, which drives the discard rate high enough to outweigh the benefit of injecting multiple logical qubits simultaneously. 

\section{Conclusion and future directions}
\label{sec:conclusion}

We presented a general in-situ and simultaneous magic state injection method for CSS qLDPC codes. Our scheme prepares multiple logical magic states directly within a single qLDPC block using only standard syndrome extraction resources, without auxiliary preparation blocks for state transfer. This work establishes a qLDPC-native injection primitive for encoded non-Clifford resource states. We showed that the same framework can be substantially improved by optimizing the initial configuration to maximize first-round protection of the injected logical supports. The main advantage of our scheme is the qubit footprint: the BB injection uses 288 physical qubits, compared with 968 for four distance-11 surface code patches. This substantially smaller footprint makes the BB approach relevant for footprint-constrained demonstrations, especially for early fault-tolerant architectures. Circuit-level evaluation on BB and HGP code instances demonstrates both practical operating points on selected BB configurations and general applicability beyond a single code family. 

Several directions remain open. First, increasing the number of simultaneously injectable logical qubits while retaining low error rates will likely require stronger combinatorial conditions, better logical representatives, and improved syndrome extraction circuits. Second, reducing the discard rate remains important for practical throughput; our results suggest that this will require code- and circuit-level improvements, especially in the structure of the round-parity detectors. Finally, a hardware-aware version of the scheme should adapt Bell state preparation on overlapping qubits to the native connectivity, gate set, and resource constraints of a given qLDPC memory experiment. We hope these directions will lead to fully qLDPC-native protocols for high-quality non-Clifford resource state preparation.

\begin{acknowledgments}
We thank Hengyun (Harry) Zhou and J. Pablo Bonilla Ataides for providing the data of hypergraph product codes.
We thank Patrick Rall for valuable insights and discussions.
This work is supported in part by the National Science Foundation (under awards CCF-2312754 and CCF-2338063), in part by the Department of Energy Co-Design Center for Quantum Advantage (C2QA), in part by QuantumCT (under NSF Engines award ITE-2302908), in part by AFOSR MURI (FA9550-26-1-B036).
Z.H. acknowledges support from the MIT Department of Mathematics, the MIT-IBM Watson AI Lab, and the NSF Graduate Research Fellowship Program under Grant No. 2141064.
YD acknowledges partial support by Boehringer Ingelheim, and NSF NQVL-ERASE (under award OSI-2435244). External interest disclosure: YD is a scientific advisor to D-Wave Quantum, Inc.
\end{acknowledgments}

\section*{DATA AVAILABILITY}
The data and code used to generate 
the results in this paper is available on GitHub~\footnote{https://github.com/kunliu7/qldpc-msi}.
\appendix

\section{High-rate qLDPC codes}
\label{app:high-rate-qLDPC-codes}

In this section, we give the definitions of the qLDPC codes used in this work.

\paragraph{Bivariate Bicycle (BB) codes}
The construction of BB codes~\cite{bravyi_high-threshold_2024,yoder_tour_2025} begins with the $\ell\times\ell$ cyclic shift matrix $S_\ell$, where the $i$-th row contains a single nonzero entry in column $(i+1) \bmod \ell$.
For example, $S_3$ is given by:
\begin{equation}
    S_3 = \begin{pmatrix}
        0 & 1 & 0 \\
        0 & 0 & 1 \\
        1 & 0 & 0
    \end{pmatrix}.
\end{equation}
We define two generator matrices $x = S_\ell \otimes I_m$ and $y = I_\ell \otimes S_m$, where $\otimes$ denotes the Kronecker product.
These generators satisfy $x^{\ell} = y^{m} = I_{\ell m}$, and any monomial $x^p y^q$ (for $0 \leq p < \ell$ and $0 \leq q < m$) corresponds to the matrix $S_\ell^p \otimes S_m^q$, allowing us to treat monomials and matrices interchangeably.
We define the transpose of $x^p y^q$ to be $(x^p y^q)^\top = x^{\ell -p} y^{m - q} = x^{-p} y^{-q}$.

A BB code is parameterized by two polynomials $A$ and $B$, each expressed as a sum of three distinct monomials:
\begin{equation}
    A = A_1 + A_2 + A_3, \quad B = B_1 + B_2 + B_3,
\end{equation}
where each $A_i$ and $B_i$ is a distinct monomial of the form $x^p y^q$.
The parity-check matrices for the $X$ and $Z$ stabilizers are then constructed as:
\begin{equation}
H_X = [A|B], \quad H_Z = [B^\top|A^\top],
\end{equation}
where $H_X$ and $H_Z$ are $n/2 \times n$ matrices with $n = 2 \ell m$ qubits.

A widely known example of BB codes is the $[[144,12,12]]$ code,
which is defined by $ \ell = 12 $, $ m = 6 $, and the polynomials:
\begin{equation}
    A=1+y+x^{3} y^{-1}, \quad B=1+x+x^{-1} y^{-3}.
\end{equation}
We use the logical Pauli representatives~\cite[Sec.~A.1]{yoder_tour_2025}
that has undergone a series of optimizations~\cite{bravyi_high-threshold_2024,
cross_improved_2024,williamson_low-overhead_2024,yoder_tour_2025}
targeting logical operations with low-overhead ancilla system.
Fortunately, such representatives are also friendly to injection,
where overlaps are intentionally designed to be limited,
and the weight of the logical Pauli representatives is exactly $ d=12 $.

\paragraph{Hypergraph Product (HGP) Codes}

We also inject on the $[[225,9,4]]$ HGP code 
from~\cite{xu_constant-overhead_2023} to demonstrate the generality of our scheme.
Such family of HGP codes are constructed as the product of classical LDPC codes defined by $(3,4)$-regular Tanner graphs, and have good circuit-level performance with constant overhead~\cite{xu_constant-overhead_2023},
and fast, parallelizable logical computation~\cite{xu_fast_2024}.

Denote the parity check matrices of the two codes as $H_1$ and $H_2$, with shape $ r_i \times n_i $ respectively.
The parity check matrix of the HGP code is given by:

\begin{equation}
    \begin{aligned}
    & H_X=\left(\begin{array}{ll}
    H_1 \otimes I_{n_2} & I_{r_1} \otimes H_2^\top
    \end{array}\right) \\
    & H_Z=\left(\begin{array}{ll}
    I_{n_1} \otimes H_2 & H_1^\top \otimes I_{r_2}
    \end{array}\right),
    \end{aligned}
\end{equation}
where $ I_{m} $ is the $ m $-dimensional identity matrix.

The standard logical Pauli representatives construction for QEC codes~\cite[Sec.~4.1]{gottesman_stabilizer_1997}
don't guarantee short-weight logical Pauli representatives for all the logical qubits.
For HGP codes, we construct the logical Pauli representatives with shorter weight than the generic construction.
We use the syndrome extraction circuit in Ref.~\cite{xu_constant-overhead_2023}.

\section{Maximal injectable sets}
\label{sec:maximal-injectable-sets}

An index set $\cI \subseteq [k]$ is injectable under injection sites
$\bm Q=\{Q_1,\dots,Q_k\}$ with $Q_i\in O_i$ if it satisfies
injection site independence (\Cref{def:injection-site-independence})
and compatible pairing (\Cref{def:generic-compatible}).

We compute injectable sets by enumerating injection-site choices and then checking subset feasibility:
\begin{enumerate}
\item Enumerate the Cartesian product $\prod_i O_i$ to obtain candidate $\bm Q$.

\item For fixed $\bm Q$, build a graph on vertices $\{1,\dots,k\}$, where edge $(i,j)$ means $i,j$ cannot be injected together because injection site independence is violated.
Any injectable set must be an independent set in this graph.
Compute all maximal independent sets (MISs) as candidates.

\item For each MIS candidate, enumerate subsets and test compatible pairing using signatures.
For each physical qubit $q$ in any $C_{i,j}$ (defined in \Cref{eq:unified-Cij}), define its signature
\begin{equation}
\sigma(q) \;:=\; \bigl(\mathbf 1[q\in C_{i,j}]\bigr)_{i,j\in\cI}.
\label{eq:signature-def}
\end{equation}
Compatible pairing exists iff every signature class $\{q:\sigma(q)=s\}$ has even cardinality.
Within each signature class, any disjoint pairing is valid.
\end{enumerate}

\section{Noise model}

The noise model we used in this work is the same as the noise model in~\cite{li_magic_2015,lao_magic_2022}, as shown in \Cref{tab:li_noise_channels,tab:li_noise_model}.

\begin{table}[t]
    \centering
    \begin{tabular}{c|l}
    \toprule
    \textbf{Noise channel} & \textbf{Definition} \\
    \midrule
    $\mathrm{MERR}(p)$ &
    $m \mapsto m\oplus e,\ \ e\sim \mathrm{Bernoulli}(p)$ \\
    $\mathrm{XERR}(p)$ &
    $\rho \mapsto (1-p)\rho + p\,X\rho X$ \\
    $\mathrm{ZERR}(p)$ &
    $\rho \mapsto (1-p)\rho + p\,Z\rho Z$ \\
    $\mathrm{DEP1}(p)$ &
    $\rho \mapsto (1-p)\rho + \frac{p}{3}\sum_{P\in\{X,Y,Z\}} P\rho P$ \\
    $\mathrm{DEP2}(p)$ &
    $\rho \mapsto (1-p)\rho + \frac{p}{15}\!\!\sum_{P\in\{I,X,Y,Z\}^{\otimes2}\setminus\{II\}}\!\! P\rho P$ \\
    \bottomrule
\end{tabular}
\caption{Definitions of the noise channels.}
\label{tab:li_noise_channels}
\end{table}

\begin{table}[t]
\centering
\renewcommand{\arraystretch}{1.2}
\begin{tabular}{l|l}
\toprule
\textbf{Ideal operation} & \textbf{Noisy operation} \\
\midrule
$R_X$ (prepare ${+}1$ eigenstate of $X$)
& $R_X \circ \mathrm{ZERR}(p_{\rm IN})$ \\
$R_Y$ (prepare ${+}1$ eigenstate of $Y$)
& $R_Y \circ \mathrm{XERR}(p_{\rm IN})$ \\
$R_Z$ (prepare ${+}1$ eigenstate of $Z$)
& $R_Z \circ \mathrm{XERR}(p_{\rm IN})$ \\
$M_Z$ (measure $Z$)
& $M_Z \circ \mathrm{MERR}(p_{\rm M})$ \\
$M_X$ (measure $X$)
& $M_X \circ \mathrm{MERR}(p_{\rm M})$ \\
$U_1$ (any single-qubit unitary)
& $U_1 \circ \mathrm{DEP1}(p_{\rm 1Q})$ \\
$U_2$ (any two-qubit unitary)
& $U_2 \circ \mathrm{DEP2}(p_{\rm 2Q})$ \\
$\text{IDLE}$ (idle)
& $\text{IDLE} \circ \mathrm{DEP1}(p_{\rm IDLE})$ \\
\bottomrule
\end{tabular}
\caption{The noise model we used in this work, the same as circuit-level noise model in~\cite{li_magic_2015,lao_magic_2022}. The notation $G \circ \mathcal{E}$ indicates that the ideal operation $G$ is \emph{followed} by the noise channel $\mathcal{E}$.}
\label{tab:li_noise_model}
\end{table}

\section{General k-injection details}
\label{app:k-injection-details}

In this section, we give the stabilizer-cleaning construction used in \Cref{sec:k-injection} and explain why it preserves the encoded logical information.

\subsection{Logical representatives reduction}
We describe a recursive algorithm for transforming the logical representatives to a set of $ k $ physical qubits.
Let
\begin{equation}
\mathcal{L} := \{ \bar{X}_i, \bar{Z}_i \}_{i=1}^{k}
\label{eq:tomas-current-logical-basis}
\end{equation}
be the current logical representatives, and let $\mathcal{S}$ be the current stabilizer generator
set.

The recursive reduction is as follows. Given a non-trivial stabilizer group, 
do the following:
\begin{enumerate}
    \item Pick a stabilizer $s \in \mathcal{S}$ with $\operatorname{wt}(s) > 1$.
    \item Measure a single-qubit Pauli $P_j$ that anticommutes with $s$ on a qubit $j \in \operatorname{supp}(s)$, i.e.,
    \begin{equation}
    \{ P_j, s \} = 0;
    \label{eq:tomas-step-anticommute}
    \end{equation} denote the measurement outcome by $m \in \{ \pm 1 \}$.
    \item Replace $s \in \mathcal{S}$ by the measured $P_j$.
    \item Update any remaining stabilizers and logical representatives that
    anticommute with $P_j$ by multiplying them by the replaced stabilizer $s$.
\end{enumerate}

To write this explicitly, let
\begin{equation}
\mathcal{S} \setminus \{ s \} = \{ g_a \}_{a=1}^{n-k-1} .
\label{eq:tomas-step-other-stabilizers}
\end{equation}
The updated generator set is
\begin{equation}
\mathcal{S}' := \{ m P_j \} \cup \{ g_a' \}_{a=1}^{n-k-1},
\label{eq:tomas-step-stabilizer-set}
\end{equation}
where
\begin{equation}
g_a' =
\begin{cases}
g_a, & [g_a, P_j] = 0, \\[3pt]
g_a s, & \{ g_a, P_j \} = 0 .
\end{cases}
\label{eq:tomas-step-stabilizer-update}
\end{equation}
Similarly, each logical representative $L \in \mathcal{L}$ is updated to
$L' \in \mathcal{L}'$ according to
\begin{equation}
L' =
\begin{cases}
L, & [L, P_j] = 0, \\[3pt]
L s, & \{ L, P_j \} = 0 .
\end{cases}
\label{eq:tomas-step-logical-update}
\end{equation}
We denote the updated logical representatives by
\begin{equation}
\mathcal{L}' := \{ \bar{X}_i', \bar{Z}_i' \}_{i=1}^{k}.
\label{eq:tomas-step-updated-logical-basis}
\end{equation}

Each reduction ``peels off'' a physical qubit $j$ from the original logical representatives $\mathcal{L}$, since qubit $j$ is fixed by a weight-$1$ Pauli $P_j$.
We repeat the above reduction step
until all stabilizers are weight-$1$, and obtain
\begin{equation}
\mathcal{S} \to \mathcal{S}', \quad
\mathcal{L} \to \mathcal{L}',
\end{equation}
where $ \mathcal{S}' $ contains only weight-$1$ Pauli operators,
and $ \mathcal{L}' $ is the new set of logical representatives that only live on a set of $ k $ physical qubits. 
We call these $ k $ physical qubits the \emph{carrier qubits} and denote them by $\mathcal{Q}_L = \set{Q_i}_{i=1}^k$, and call the remaining $ n-k $ physical qubits supporting single-qubit Paulis in $ \mathcal{S}' $ the \emph{peeled qubits}, $\mathcal{Q}_P$.

Since $ \mathcal{L}' $ is a valid conjugate logical representative,
there exists a Clifford $C$ such that it maps the standard computational basis on the $ k $ carrier qubits to the new logical representatives $\mathcal{L}'$:
\begin{equation}
C X_{Q_i} C^\dagger = \bar{X}_i',
\qquad
C Z_{Q_i} C^\dagger = \bar{Z}_i'.
\end{equation}
And thus, $C Y_{Q_i} C^\dagger = \bar{Y}_i'$.

Therefore, to prepare the $+1$ eigenstate of $ \bar{M}_i' $ for all $ i \in [k] $, we first prepare the carrier qubits $\set{Q_i}$ in the $+1$ eigenstates of $ \{M_{Q_i}\} $, then apply the Clifford circuit $C$ to the carrier qubits:
\begin{equation}
\begin{aligned}
    \bar{M}_i' &= \cos\theta_i\, \bar{X}_i' + \sin\theta_i\, \bar{Y}_i' \\
    &= C \bigl(\cos\theta_i\, X_{Q_i} + \sin\theta_i\, Y_{Q_i}\bigr) C^\dagger \\
    &= C \;{M}_{Q_i}\; C^\dagger .
\end{aligned}
\end{equation}

Although the construction above applies generally to arbitrary CSS codes and can inject all \(k\) logical qubits simultaneously, it has several caveats.
One limitation is that the state-preparation Clifford circuit \(C\) is typically nonlocal and deep, and therefore not itself fault tolerant. 
This issue may be mitigated by implementing or compiling \(C\) on the logical level instead of the physical level. 
For instance, on an extractor architecture~\cite{he_extractors_2025} implementing Pauli-based computation~\cite{Bravyi2016}, a logical Clifford circuit can be implemented by changing the Pauli basis of subsequent logical Pauli-rotations. 
In other words, instead of implementing \(C\), we simply change subsequent gates from \(e^{i\theta P}\) to \(e^{i\theta C^\dagger PC}\). 
In a model where every Pauli rotation has roughly the same implementation cost~\cite{he_extractors_2025}, this Clifford rotation is essentially free.

\subsection{Logical information preservation}

During the reduction step, the update rule in
\Cref{eq:tomas-step-stabilizer-update,eq:tomas-step-logical-update}
preserves the logical subsystem for three reasons.

First, the measurement of $P_j$ reveals no logical information. Let
$\ket{\phi}$ be any state in the codespace of $S$, so in particular
$s \ket{\phi} = \ket{\phi}$. Using \Cref{eq:tomas-step-anticommute},
\begin{equation}
\bra{\phi} P_j \ket{\phi}
=
\bra{\phi} s P_j s \ket{\phi}
=
- \bra{\phi} P_j \ket{\phi}
= 0 .
\label{eq:tomas-step-random-outcome}
\end{equation}
Hence the two outcomes $m = \pm 1$ occur with equal probability $1/2$,
independently of the encoded logical state. The measurement therefore creates a
new local stabilizer sign, but it does not read out any logical observable.

Second, the number of encoded qubits is unchanged. The original generator $s \in \mathcal{S}$ is
removed and replaced by the new independent generator $m P_j$. Thus
$\mathcal{S}'$ again contains $n-k$ independent commuting generators, so the
updated code still encodes exactly $k$ logical qubits.

Third, the logical representatives are changed only by multiplication with the original
stabilizer $s$. If $L' = L s$, then on the original codespace,
\begin{equation}
L' \ket{\phi} = L s \ket{\phi} = L \ket{\phi} .
\label{eq:tomas-step-same-logical-action}
\end{equation}
Thus \Cref{eq:tomas-step-logical-update} changes only the
\emph{representatives} of the logical Paulis, not the logical Pauli classes
themselves.
And thus, 
\begin{equation}
[\bar{M}_i', \proj_{\mathcal{S}}] = 0 \quad \forall i \in [k].
\end{equation}
In another word, if the state is prepared in the simultaneous $ +1 $ eigenstate of $ \bar{M}_i' $ for all $ i \in [k] $,
then after the SE of $\mathcal{S}$, the state is the simultaneous $ +1 $ eigenstate of $ \bar{M}_i $ for all $ i \in [k] $.

\section{Proof of \Cref{thm:unified-sufficient-condition}}
\label{sec:proof-all-in-one}

\begin{proof}[Proof sketch]
    For each injected logical $i$, decompose the logical representatives into two parts:
    \[
    \bar X_i = X_{O_i}X_{A_i},
    \qquad
    \bar Z_i = Z_{O_i}Z_{B_i}.
    \]
    And thus, assume $ |O_i| = 2t+1 $, we have
    \[
    \bar Y_i = (-1)^t Y_{O_i}X_{A_i}Z_{B_i}.
    \]
    The proof has two corresponding conceptual parts: first on $ A_i $ and $ B_i $ together, and second on $ O_i $ only.
    
    First, the prescribed preparation enforces a family of stabilizer constraints.
    The Bell states on the compatible pairing $\cP$ imply that every overlap set
    $C_{i,j}$ is stabilized by both $X_{C_{i,j}}$ and $Z_{C_{i,j}}$.
    Together with the direct preparation of $\ket{+}$ on $A_i\setminus C$,
    $\ket{0}$ on $B_i\setminus C$, and the $+1$ eigenstate of $M_{Q_i}$ on $Q_i$,
    this yields
    \begin{equation}
    X_{A_i}|\psi_{\mathrm{in}}\rangle = |\psi_{\mathrm{in}}\rangle, \quad
    Z_{B_i}|\psi_{\mathrm{in}}\rangle = |\psi_{\mathrm{in}}\rangle, \quad
    \forall i\in\cI.
    \label{eq:multi-prep-generic}
    \end{equation}

    Second, these constraints let us peel the logical support away from the
    injection site.
    The factors on $A_i$ and $B_i$ act trivially, so $\bar X_i$ and $\bar Y_i$
    reduce to operators on $O_i$ only.
    Then the Bell pairs inside $O_i\setminus\{Q_i\}$ cancel pairwise, leaving only
    the single-weight operator on $Q_i$:
    \begin{equation}
    \begin{aligned}
    \bar M_i|\psi_{\mathrm{in}}\rangle 
    &= \left(\cos\theta_i\,X_{O_i} X_{A_i} + \sin\theta_i\,Y_{O_i} X_{A_i} Z_{B_i}\right) |\psi_{\mathrm{in}}\rangle \\
    &= \left(\cos\theta_i\,X_{O_i} + \sin\theta_i\,Y_{O_i}\right) |\psi_{\mathrm{in}}\rangle \\
    &= M_{Q_i}|\psi_{\mathrm{in}}\rangle = |\psi_{\mathrm{in}}\rangle,
    \qquad \forall i\in\cI.
    \end{aligned}
    \label{eq:M-reduce-to-Qi-app}
    \end{equation}
\end{proof}

Following the proof sketch, we prove in two parts.
In Part 1 of the proof, we show \Cref{eq:multi-prep-generic}.
In Part 2 of the proof, we show \Cref{eq:M-reduce-to-Qi-app}.

\begin{proof}[Proof, Part 1]
\textbf{Step 1: Record the stabilizers imposed directly by the preparation.}
For each pair $\{u,v\}\in\cP$, let
\[
|\mathrm{Bell}_{\{u,v\}}\rangle := (|00\rangle+|11\rangle)/\sqrt2.
\]
By construction, the prepared state $|\psi_{\mathrm{in}}\rangle$ satisfies
\begin{equation}
    \begin{aligned}
(X_uX_v)|\psi_{\mathrm{in}}\rangle &= |\psi_{\mathrm{in}}\rangle, \\
(Z_uZ_v)|\psi_{\mathrm{in}}\rangle &= |\psi_{\mathrm{in}}\rangle, \quad
\forall \{u,v\}\in\cP.
\end{aligned}
\label{eq:bell-pair-stabs-app}
\end{equation}
Also, for every $i\in\cI$,
\begin{equation}
\begin{aligned}
X_{A_i\setminus C}|\psi_{\mathrm{in}}\rangle &= |\psi_{\mathrm{in}}\rangle, \\
Z_{B_i\setminus C}|\psi_{\mathrm{in}}\rangle &= |\psi_{\mathrm{in}}\rangle, \\
M_{Q_i}|\psi_{\mathrm{in}}\rangle &= |\psi_{\mathrm{in}}\rangle.
\end{aligned}
\label{eq:direct-prep-stabs-app}
\end{equation}
Injection site independence ensures that $Q_i$ is not required by any other
logical support, so preparing the $+1$ eigenspace of $M_{Q_i}$ is compatible
with the rest of the construction.

\medskip
\noindent
\textbf{Step 2: Use compatible pairing to control every overlap set $C_{i,j}$.}
Fix any $i,j\in\cI$.
Because $\cP$ is compatible, each Bell pair is either fully contained in
$C_{i,j}$ or disjoint from it.
Hence the restricted set
\[
\cP_{i,j}:=\{\{u,v\}\in\cP:\{u,v\}\subseteq C_{i,j}\}
\]
partitions $C_{i,j}$.
Therefore,
\[
X_{C_{i,j}}=\prod_{\{u,v\}\in\cP_{i,j}} X_uX_v,
\qquad
Z_{C_{i,j}}=\prod_{\{u,v\}\in\cP_{i,j}} Z_uZ_v.
\]
Using \Cref{eq:bell-pair-stabs-app}, we obtain
\begin{equation}
X_{C_{i,j}}|\psi_{\mathrm{in}}\rangle = |\psi_{\mathrm{in}}\rangle,
\qquad
Z_{C_{i,j}}|\psi_{\mathrm{in}}\rangle = |\psi_{\mathrm{in}}\rangle,
\qquad \forall i,j\in\cI.
\label{eq:cij-stabs-app}
\end{equation}

\medskip
\noindent
\textbf{Step 3: Extend from overlaps to the full sets $A_i$ and $B_i$.}
We next show that \Cref{eq:multi-prep-generic} hold.

Fix $i\in\cI$.
Decompose
\[
A_i=(A_i\setminus C)\sqcup(A_i\cap C),
\qquad
B_i=(B_i\setminus C)\sqcup(B_i\cap C).
\]
The factors on $A_i\setminus C$ and $B_i\setminus C$ are already handled by
\Cref{eq:direct-prep-stabs-app}, so it remains to treat the parts inside $C$.

Consider $A_i\cap C$.
Any qubit $u\in A_i\cap C$ lies in some overlap set
$C_{i,j}=A_i\cap B_j$.
Since $\cP$ is compatible, the partner of $u$ under $\cP$ lies in the same
$C_{i,j}$, hence also in $A_i\cap C$.
Therefore the Bell pairs that intersect $A_i\cap C$ actually partition
$A_i\cap C$, so
\[
X_{A_i\cap C}
=
\prod_{\{u,v\}\subseteq A_i\cap C} X_uX_v.
\]
Every factor on the right stabilizes $|\psi_{\mathrm{in}}\rangle$ by
\Cref{eq:bell-pair-stabs-app}, hence
$X_{A_i\cap C}|\psi_{\mathrm{in}}\rangle=|\psi_{\mathrm{in}}\rangle$.
Combining with \Cref{eq:direct-prep-stabs-app} yields
\begin{equation}
X_{A_i}|\psi_{\mathrm{in}}\rangle = |\psi_{\mathrm{in}}\rangle.
\label{eq:Ai-stab-app}
\end{equation}

The same argument applies to $B_i\cap C$.
Indeed, any $u\in B_i\cap C$ lies in some $C_{k,i}=A_k\cap B_i$, and
compatibility forces its partner to lie in the same set, hence also in
$B_i\cap C$.
Thus
\[
Z_{B_i\cap C}
=
\prod_{\{u,v\}\subseteq B_i\cap C} Z_uZ_v,
\]
so $Z_{B_i\cap C}$ stabilizes $|\psi_{\mathrm{in}}\rangle$, and together with
\Cref{eq:direct-prep-stabs-app} we get
\begin{equation}
Z_{B_i}|\psi_{\mathrm{in}}\rangle = |\psi_{\mathrm{in}}\rangle.
\label{eq:Bi-stab-app}
\end{equation}

At this point, \Cref{eq:multi-prep-generic} is established.
\medskip
\noindent
\end{proof}

\begin{proof}[Proof, Part 2]
\textbf{Step 4: Reduce $\bar X_i$ and $\bar Y_i$ to operators on $O_i$.}
Fix $i\in\cI$.
Recall the CSS representatives
\begin{equation}
\bar X_i = X_{O_i}X_{A_i},
\qquad
\bar Z_i = Z_{O_i}Z_{B_i}.
\label{eq:css-reps-app}
\end{equation}
Using \Cref{eq:Ai-stab-app}, we can drop the $A_i$ factor when acting on
$|\psi_{\mathrm{in}}\rangle$:
\begin{equation}
\bar X_i|\psi_{\mathrm{in}}\rangle = X_{O_i}|\psi_{\mathrm{in}}\rangle.
\label{eq:X-reduce-to-Oi-app}
\end{equation}

Next, write $|O_i|=2t+1$.
Then
\[
\bar Y_i := i\bar X_i\bar Z_i = (-1)^t Y_{O_i}X_{A_i}Z_{B_i}.
\]
Applying \Cref{eq:Ai-stab-app,eq:Bi-stab-app}, we obtain
\begin{equation}
\bar Y_i|\psi_{\mathrm{in}}\rangle = (-1)^t Y_{O_i}|\psi_{\mathrm{in}}\rangle.
\label{eq:Y-reduce-to-Oi-app}
\end{equation}

So the logical rotated observable has already been reduced from the full
logical support to the core set $O_i$.

\medskip
\noindent
\textbf{Step 5: Peel off the paired qubits in $O_i\setminus\{Q_i\}$.}
We now use the Bell-pair structure inside $O_i\setminus\{Q_i\}$ to reduce from
$O_i$ all the way down to the single site $Q_i$.

By compatibility, the restricted set
\[
\cP_{i,i}:=\{\{u,v\}\in\cP:\{u,v\}\subseteq O_i\setminus\{Q_i\}\}
\]
partitions $O_i\setminus\{Q_i\}$.
Since $|O_i|=2t+1$, we may write
\[
O_i\setminus\{Q_i\}=\bigsqcup_{\ell=1}^t \{r_\ell,s_\ell\}.
\]
For each pair $\{r_\ell,s_\ell\}$, \Cref{eq:bell-pair-stabs-app} implies
\[
(X_{r_\ell}X_{s_\ell})|\psi_{\mathrm{in}}\rangle=|\psi_{\mathrm{in}}\rangle,
\qquad
(Z_{r_\ell}Z_{s_\ell})|\psi_{\mathrm{in}}\rangle=|\psi_{\mathrm{in}}\rangle.
\]
Multiplying these two relations gives
\[
(-Y_{r_\ell}Y_{s_\ell})|\psi_{\mathrm{in}}\rangle=|\psi_{\mathrm{in}}\rangle.
\]
Therefore the paired qubits cancel in both the $X$ and $Y$ channels:
\begin{equation}
X_{O_i}|\psi_{\mathrm{in}}\rangle = X_{Q_i}|\psi_{\mathrm{in}}\rangle,
\qquad
Y_{O_i}|\psi_{\mathrm{in}}\rangle = (-1)^t Y_{Q_i}|\psi_{\mathrm{in}}\rangle.
\label{eq:O-to-Q-app}
\end{equation}

Combining \Cref{eq:X-reduce-to-Oi-app,eq:Y-reduce-to-Oi-app,eq:O-to-Q-app},
we conclude
\[
\bar X_i|\psi_{\mathrm{in}}\rangle = X_{Q_i}|\psi_{\mathrm{in}}\rangle,
\qquad
\bar Y_i|\psi_{\mathrm{in}}\rangle = Y_{Q_i}|\psi_{\mathrm{in}}\rangle.
\]
Hence
\[
\bar M_i(\theta_i)|\psi_{\mathrm{in}}\rangle
=
\bigl(\cos\theta_i\,X_{Q_i}+\sin\theta_i\,Y_{Q_i}\bigr)|\psi_{\mathrm{in}}\rangle
=
M_{Q_i}|\psi_{\mathrm{in}}\rangle.
\]
Finally, \Cref{eq:direct-prep-stabs-app} gives
$M_{Q_i}|\psi_{\mathrm{in}}\rangle=|\psi_{\mathrm{in}}\rangle$.
Therefore
\[
\bar M_i|\psi_{\mathrm{in}}\rangle = |\psi_{\mathrm{in}}\rangle,
\qquad \forall i\in\cI.
\]

In summary, the preparation supplies local and pairwise stabilizers,
compatible pairing ensures that all relevant support away from $Q_i$ is
organized into cancellable Bell pairs, and thus each logical rotated
observable reduces exactly to the prepared single-site operator at $Q_i$.
\end{proof}

\section{Detector error model}
\label{sec:detector-error-model}

We use the detector error model (DEM)~\cite{gidney_stim_2021} as a compact description of the noisy syndrome extraction circuit.
For our purposes, the DEM serves two roles.
First, it provides a convenient first-order language for identifying harmful error mechanisms that survive postselection and induce logical injection errors.
Second, it yields a simple explanatory proxy for the discard rate, which is one of the main quantities entering the spacetime-overhead analysis.
This appendix also explains why the discard rate remains similar across different optimized initial configurations.

\subsection{Definitions}

A \emph{detector} is a binary parity check built from a specified subset of measurement-record bits.
Detectors are defined so that, in the absence of noise, every detector outcome is deterministically $0$.
Noise can toggle some of these detector outcomes to $1$.

A DEM records how elementary stochastic mechanisms toggle detector outcomes and tracked logical observables, without explicitly tracking the full quantum state.
For the analysis below, we represent the DEM as a collection of Bernoulli mechanisms indexed by $e$.
Each mechanism is specified by a triple
\begin{equation}
(p_e, D_e, L_e),
\label{eq:dem_mechanism_triple}
\end{equation}
where $p_e\in[0,1]$ is the occurrence probability of mechanism $e$, $D_e\subseteq\{1,\dots,N_D\}$ is the set of detectors toggled by $e$, and $L_e\subseteq\{1,\dots,N_L\}$ is the set of tracked logical observables toggled by $e$.
Here $N_L$ is the number of logical observables included in the DEM analysis, and $N_D$ is the number of detectors.

\subsection{Error mechanism counting}

For MSI, the tracked logical observables are the injected logical observables indexed by $\cI$, so we take $L_e\subseteq \cI$.
The most dangerous first-order mechanisms are those with
\begin{equation}
L_e \neq \emptyset,
\qquad
D_e = \emptyset,
\end{equation}
because they flip one or more injected logical observables without triggering any detector.
Such mechanisms evade postselection and also leave no detector information for decoding.

Because detector and logical outcomes combine by parity, multiple mechanisms can cancel each other.
However, the probability that two distinct mechanisms $e_1$ and $e_2$ occur in the same shot is second order, namely $p_{e_1}p_{e_2}$.
Accordingly, a standard first-order approximation is to assume that at most one DEM mechanism occurs during a shot.
Previous MSI analyses use this viewpoint to count undetected logical-fault mechanisms and find good agreement with circuit-level Monte Carlo simulation~\cite{li_magic_2015,lao_magic_2022,gidney_cleaner_2023}.
In our setting, this yields
\begin{equation}
p_\cJ \approx \sum_{e:\,L_e = \cJ} p_e,
\end{equation}
where $p_\cJ$ is the probability that the incorrect logicals are \emph{exactly} the subset $\cJ$.
In \Cref{sec:evaluation}, we show that this first-order approximation closely matches circuit-level Monte Carlo simulation.

\subsection{Postselection and discard rate}

Let $P\subseteq\{1,\dots,N_D\}$ be the set of postselection detectors.
In our simulations, $P$ consists of (i) fixed-stabilizer detectors from the first SE round and (ii) round-parity detectors comparing the first two SE rounds.
A shot is \emph{accepted} if and only if all detectors in $P$ are $0$; otherwise the shot is discarded.

Define the set of \emph{postselection-relevant} mechanisms by
\begin{equation}
R := \{e:\; D_e\cap P\neq \emptyset\}.
\label{eq:relevant_mechanisms_set_simplified}
\end{equation}
If no mechanism in $R$ occurs, then every postselection detector remains $0$.
Therefore, discard implies that at least one mechanism in $R$ occurred, which gives the upper bound
\begin{equation}
r_{\rm disc}
\le
\Pr\!\left(\exists e\in R:\; e\text{ occurs}\right).
\end{equation}
The inequality above is generally strict because multiple postselection-relevant mechanisms can occur simultaneously and cancel each other by parity.

Under the Bernoulli-mechanism interpretation above, the probability that no mechanism in $R$ occurs is
\begin{equation}
a := \prod_{e\in R}(1-p_e).
\end{equation}
We call $a$ the \emph{acceptance proxy}.
Equivalently,
\begin{equation}
\Pr\!\left(\exists e\in R:\; e\text{ occurs}\right)
= 1-\prod_{e\in R}(1-p_e)
= 1-a.
\end{equation}
This motivates the \emph{union proxy} for the discard rate,
\begin{equation}
r_{\rm disc}^{\rm UP} := 1-a = 1-\prod_{e\in R}(1-p_e).
\label{eq:rdisc_up_def}
\end{equation}
As shown in \Cref{sec:evaluation}, the union proxy closely tracks the discard rate measured by circuit-level Monte Carlo simulation, $r_{\rm disc}^{\rm MC}$.

\subsection{Acceptance factor decomposition}

To explain why the discard rate varies little across optimized initial configurations, it is useful to separate mechanisms caught immediately by fixed-stabilizer detectors from those caught only by round-parity detectors.
Let $P_{\rm fix}\subseteq P$ denote the set of fixed-stabilizer detectors in the first SE round, and let $|F|:=|P_{\rm fix}|$ be the number of fixed stabilizers.
Then the acceptance proxy factorizes as
\begin{align}
a
&= \prod_{e:\,D_e\cap P\neq \emptyset}(1-p_e) \\
&=
\underbrace{\prod_{e:\,D_e\cap P_{\rm fix}\neq \emptyset}(1-p_e)}_{a_{\rm fix}}
\cdot
\underbrace{\prod_{\substack{e:\,D_e\cap P\neq \emptyset\\ D_e\cap P_{\rm fix}=\emptyset}}(1-p_e)}_{a_{{\rm par}\mid{\rm fix}}}.
\label{eq:acceptance_decomp}
\end{align}
Here $a_{\rm fix}$ captures the mechanisms that would be flagged directly by fixed-stabilizer detectors, while $a_{{\rm par}\mid{\rm fix}}$ captures the remaining postselection-relevant mechanisms that do \emph{not} trigger any fixed-stabilizer detector and are therefore caught only by round-parity detectors.
The two underlying mechanism sets are disjoint, so the product factorization is exact.
Substituting \Cref{eq:acceptance_decomp} into \Cref{eq:rdisc_up_def} gives
\begin{equation}
r_{\rm disc}^{\rm UP}(P)
=
1-a_{\rm fix}a_{{\rm par}\mid{\rm fix}}.
\label{eq:rdisc_up_factored}
\end{equation}

\subsection{Explanation of similar discard rate across different initial configurations}
\label{sec:explanation-similar-discard-rate}

\begin{table}[htb]
    \centering
    \begin{tabular}{ccccccc}
    \toprule
    $p$ & $|\mathcal{I}|$ & $|F|$    & $(a_{\rm fix},\, a_{{\rm par}\mid{\rm fix}})$ & $r_{\rm disc}^{\rm UP}(p)$ & $r_{\rm disc}^{\rm MC}(p)$ \\
    \midrule
$10^{-4}$ & 1 & 26 & (0.9563, 0.8018) & 0.2333 & 0.2331  \\
$10^{-4}$ & 2 & 32 & (0.9487, 0.8062) & 0.2351 & 0.2352  \\
$10^{-4}$ & 3 & 31 & (0.9495, 0.8056) & 0.2351 & 0.2346  \\
$10^{-4}$ & 4 & 32 & (0.9484, 0.8061) & 0.2355 & 0.2352  \\
$10^{-4}$ & 5 & 28 & (0.9517, 0.8040) & 0.2348 & 0.2346  \\
$10^{-4}$ & 6 & 18 & (0.9650, 0.7965) & 0.2314 & 0.2315  \\
\midrule
$5 \times 10^{-4}$ & 1 & 26 & (0.7996, 0.3313) & 0.7351 & 0.7344  \\
$5 \times 10^{-4}$ & 2 & 32 & (0.7683, 0.3407) & 0.7383 & 0.7380  \\
$5 \times 10^{-4}$ & 3 & 31 & (0.7716, 0.3393) & 0.7382 & 0.7383  \\
$5 \times 10^{-4}$ & 4 & 32 & (0.7672, 0.3404) & 0.7388 & 0.7387  \\
$5 \times 10^{-4}$ & 5 & 28 & (0.7808, 0.3360) & 0.7376 & 0.7382  \\
$5 \times 10^{-4}$ & 6 & 18 & (0.8367, 0.3205) & 0.7318 & 0.7316  \\
\midrule
$10^{-3}$ & 1 & 26 & (0.6394, 0.1097) & 0.9298 & 0.9300  \\
$10^{-3}$ & 2 & 32 & (0.5903, 0.1160) & 0.9315 & 0.9312  \\
$10^{-3}$ & 3 & 31 & (0.5954, 0.1151) & 0.9315 & 0.9311  \\
$10^{-3}$ & 4 & 32 & (0.5886, 0.1159) & 0.9318 & 0.9318  \\
$10^{-3}$ & 5 & 28 & (0.6096, 0.1129) & 0.9312 & 0.9313  \\
$10^{-3}$ & 6 & 18 & (0.7001, 0.1027) & 0.9281 & 0.9278  \\
    \bottomrule
    \end{tabular}
    \caption{For each physical error rate $p$ and injection configuration $|\mathcal{I}|$, we report the Monte Carlo discard rate $r_{\rm disc}^{\rm MC}$, the union proxy $r_{\rm disc}^{\rm UP}$, and the two acceptance factors $a_{\rm fix}$ and $a_{{\rm par}\mid{\rm fix}}$ defined in \Cref{eq:acceptance_decomp}. The factorization \(r_{\rm disc}^{\rm UP}=1-a_{\rm fix}a_{{\rm par}\mid{\rm fix}}\) is given by \Cref{eq:rdisc_up_factored}.}
    \label{tab:discard_components}
\end{table}

\Cref{tab:discard_components} shows that the union proxy tracks the Monte Carlo discard rate very closely for all optimized BB code configurations we tested.
It also explains why the discard rate changes only weakly across these configurations, even though the number of fixed stabilizers $|F|$ varies substantially.
From
\begin{equation}
    r_{\rm disc}^{\rm UP} = 1 - a_{\rm fix} a_{{\rm par}\mid{\rm fix}},
\end{equation}
we obtain
\begin{equation}
    \frac{\partial r_{\rm disc}^{\rm UP}}{\partial a_{\rm fix}} = -a_{{\rm par}\mid{\rm fix}}.
\end{equation}
Thus, for a given change $\Delta a_{\rm fix}$,
\begin{equation}
|\Delta r_{\rm disc}^{\rm UP}| \approx a_{{\rm par}\mid{\rm fix}}\,|\Delta a_{\rm fix}|,
\end{equation}
so the sensitivity of the discard rate to changes in $a_{\rm fix}$ is itself controlled by $a_{{\rm par}\mid{\rm fix}}$.

A representative comparison is provided by the optimized configurations with $|\cI|=5$ and $|\cI|=6$.
At high physical error rate, for example $p=10^{-3}$, the change in $a_{\rm fix}$ is relatively large:
\begin{equation}
\Delta a_{\rm fix} \approx 0.7001 - 0.6096 \approx 0.091.
\end{equation}
However, in this regime $a_{{\rm par}\mid{\rm fix}}$ is small, about $0.10$, so the resulting change in discard rate is close to $10^{-2}$.
By contrast, at low physical error rate, for example $p=10^{-4}$, the coefficient $a_{{\rm par}\mid{\rm fix}}$ is much larger, about $0.80$, but the change in $a_{\rm fix}$ becomes much smaller:
\begin{equation}
\Delta a_{\rm fix} \approx 0.9650 - 0.9517 \approx 0.013.
\end{equation}
Again, the resulting change in discard rate remains only of order $10^{-2}$.

The same qualitative mechanism appears across the optimized configurations in \Cref{tab:discard_components}.
At larger $p$, round-parity-only mechanisms are so prevalent that $a_{{\rm par}\mid{\rm fix}}$ is small, which suppresses the impact of variations in $a_{\rm fix}$.
At smaller $p$, $a_{{\rm par}\mid{\rm fix}}$ is larger, but the optimized configurations have much closer values of $a_{\rm fix}$.
Therefore, even substantial changes in the number of fixed stabilizers do not translate into equally large changes in the total discard rate.
This explains why the discard rate remains similar across the optimized initial configurations considered in \Cref{sec:evaluation}.

\section{Maximize Protection of Logical Rotated Pauli Observables}
\label{sec:max-protection-injected-supports}

After the sufficient conditions in \Cref{sec:improved-injection-scheme} are satisfied, the remaining design freedom lies in the basis assignments on data qubits whose initialization is not already fixed by the injection construction. We use this freedom to improve the first-round post-selection of the scheme. More precisely, we formulate a mixed-integer linear programming problem that chooses these remaining basis assignments so as to maximize the number of qubits on the supports of the injected logical observables whose dominant reset faults are detected by fixed stabilizers in the first syndrome extraction round. This objective is a first-order surrogate for reducing the injection error rate: it does not optimize the full logical error probability directly, but it more closely tracks the harmful error mechanisms than simply maximizing the total number of fixed stabilizers.

\subsection{Sets and notation}

Assume we have already settled down the initializations for the qubits in $ \supp(\bar X_i) \cup \supp(\bar Z_i) $,
and the prepared state $ \ket{\psi_{\mathrm{in}}} $ is stabilized by a set of \emph{initial stabilizers} $ \mathcal{R} $,
distinct from the \emph{code stabilizers} $ \mathcal{S} $.
An injectable set $ \cI $ gives a compatible pairing $ \cP = \set{\{u_t,v_t\}}_{t=1}^T $.
So by \Cref{thm:unified-sufficient-condition}, $ \mathcal{R} $ includes
\begin{enumerate}
  \item $ X_{u_t}X_{v_t},\, Z_{u_t}Z_{v_t} $, $ \forall t\in [T] $,
  \item $ M_{Q_i} $, $ \forall i \in \cI $,
  \item $ X_q $, $ \forall q \in A_i \setminus C$, $\forall i \in \cI$,
  \item $ Z_q $, $ \forall q \in B_i \setminus C$, $\forall i \in \cI$.
\end{enumerate}

In the remainder of this section,
we assume we have removed the stabilizers that can never be fixed
due to not commuting with some initial stabilizers in $ \mathcal{R} $.
Then, \( m_X \) and \( m_Z \) are the number of remaining $X$-type and $Z$-type stabilizers, i.e., candidates for being fixed stabilizers.

Let $I \subseteq [n]$
denote the set of data qubits on the supports of the injected logical operators that we want to protect from errors (different from $ \cI $ which denotes the set of injected logicals).
Here, let 
\begin{equation}
I = \bigcup_i \, \big((\supp(\bar X_i) \cup \supp(\bar Z_i)) \setminus \{Q_i\}\big).
\end{equation}

Let 
\begin{equation}
U = [n] \setminus I
\end{equation}
denote the set of data qubits whose single-qubit initialization basis we are free to choose, typically outside the supports of injected logicals.

For each $X$-type or $ Z $-type row $r$ in $H_X$ or $H_Z$, define its full support
\begin{equation}
  \begin{aligned}
  \supp^X(r) &:= \bigl\{ j \in [n] \;\big|\; (H_X)_{r,j} = 1 \bigr\}, \\
  \supp^Z(r) &:= \bigl\{ j \in [n] \;\big|\; (H_Z)_{r,j} = 1 \bigr\},
  \end{aligned}
\end{equation}
and its restriction to $U$,
\begin{equation}
  \begin{aligned}
  \supp^X_U(r) &:= \supp^X(r) \cap U, \\
  \supp^Z_U(r) &:= \supp^Z(r) \cap U.
  \end{aligned}
\end{equation}

For each qubit $j \in I \setminus {\bm Q}$ the injection scheme already fixes its single-qubit basis,
\begin{equation}
  b_j \in \{X,Z\},
\end{equation}
meaning that $j$ is initialized in an $X$-eigenstate if $b_j = X$, or in a $Z$-eigenstate if $b_j = Z$.
To prepare a Bell state $ \{u,v\} $ on $ I $ as described in \Cref{sec:improved-injection-scheme},
we first initialize $ u $ and $ v $ in the $ X $ and $ Z $ bases, respectively,
and then do a CNOT from $ u $ to $ v $.
And thus, we set $ b_u = X $ and $ b_v = Z $.

For each $j \in I$, we define detection sets
\begin{align}
  D_X(j) &:= \bigl\{ r \in [m_X] \;\big|\; (H_X)_{r,j} = 1 \bigr\},
  \label{eq:DX}
  \\
  D_Z(j) &:= \bigl\{ r \in [m_Z] \;\big|\; (H_Z)_{r,j} = 1 \bigr\}.
  \label{eq:DZ}
\end{align}
Thus $D_X(j)$ are the $X$-type stabilizers that anticommute with a $Z_j$ error, and $D_Z(j)$ are the $Z$-type stabilizers that anticommute with an $X_j$ error.

\subsection{Error model and protection}

Following \cite{li_magic_2015,lao_magic_2022}, we adopt a simple noise model for single-qubit initialization (reset) errors:
\begin{itemize}
  \item If qubit $j$ is initialized in the $X$ basis ($b_j = X$), its reset error is modeled as a Pauli $Z_j$ applied after a perfect reset.
  \item If qubit $j$ is initialized in the $Z$ basis ($b_j = Z$), its reset error is modeled as a Pauli $X_j$ applied after a perfect reset.
\end{itemize}
An $X$-type stabilizer $s^X_r$ detects a $Z_j$ error if and only if $(H_X)_{r,j} = 1$ (i.e., $r \in D_X(j)$); similarly, a $Z$-type stabilizer $s^Z_r$ detects an $X_j$ error if and only if $(H_Z)_{r,j} = 1$ (i.e., $r \in D_Z(j)$). 

When single-qubit reset is the only allowed gate,
a $ X $-type stabilizer is \emph{fixed}, i.e., having a deterministic measurement outcome, iff
all the support qubits are initialized in the $ X $ basis;
similarly for $ Z $-type stabilizers.
However, we prepare Bell states to resolve the overlaps,
which involves two-qubit gates;
the injection site is prepared in +1 eigenstate of a rotated Pauli observable $ M $, which do not commute with any $X$-type  or $ Z $-type stabilizer.

For the reasons above, we define the fixed stabilizers with respect to the optimization space $ U $:
\begin{itemize}
  \item An $X$-type row $r$ is \emph{fixed with respect to $U$} if every qubit in $\supp^X_U(r)$ is initialized in the $X$ basis.
  \item A $Z$-type row $r$ is \emph{fixed with respect to $U$} if every qubit in $\supp^Z_U(r)$ is initialized in the $Z$ basis.
\end{itemize}

Note that reset errors on the qubits in the Bell state $ \{u,v\} $ are detectable by the fixed stabilizers,
since both the $ Z $ error on the control qubit $ u $
and the $ X $ error on the target qubit $ v $ penetrate through (commute with) the CNOT($ u $, $ v $). That is why we incorporate the Bell state $ \{u,v\} $ into the set $ I $ and set $ b_u = X $ and $ b_v = Z $.

\subsection{MILP: Decision variables}

We introduce three families of binary decision variables.

\paragraph{Basis choice on $U$.}
For each qubit $j \in U$, we define
\begin{equation}
  x_j \in \{0,1\},
\end{equation}
interpreted as
\begin{equation}
  \begin{aligned}
    x_j = 1 \;\Longleftrightarrow\; \text{initialize qubit $j$ in the $X$ basis}, \\
    x_j = 0 \;\Longleftrightarrow\; \text{initialize qubit $j$ in the $Z$ basis}.
  \end{aligned}
\end{equation}

\paragraph{Row-fixed indicators.}
For each $X$-type row $r$ with $\supp^X_U(r) \neq \emptyset$, we introduce
\begin{equation}
  f_r^X \in \{0,1\},
\end{equation}
indicating whether that $X$ stabilizer is fixed with respect to the basis choices on $U$. Similarly, for each $Z$-type row $r$ with $\supp^Z_U(r) \neq \emptyset$,
\begin{equation}
  f_r^Z \in \{0,1\}.
\end{equation}
Rows with $\supp^X_U(r) = \emptyset$ or $\supp^Z_U(r) = \emptyset$ are unaffected by basis choices on $U$ and are treated as outside the optimization.

\paragraph{Protection indicators for targets.}
For each target qubit $j \in I$, we define
\begin{equation}
  p_j \in \{0,1\},
\end{equation}
where $p_j = 1$ means that the reset error on qubit $j$ is detected by at least one fixed stabilizer, and $p_j = 0$ means that $j$ is unprotected.

\subsection{MILP: Constraints}

\paragraph{Fixed stabilizers and basis compatibility.}
If an $X$-type row $r$ is fixed then all uninitialized qubits in its support must be in the $X$ basis; similarly for $Z$-type rows and the $Z$ basis. This yields the implications
\begin{equation}
  \begin{aligned}
    f_r^X = 1 \;\Longrightarrow\; x_j = 1 \quad \forall j \in \supp^X_U(r), \\
    f_r^Z = 1 \;\Longrightarrow\; x_j = 0 \quad \forall j \in \supp^Z_U(r).
  \end{aligned}
\end{equation}
We encode these as the linear inequalities
\begin{equation}
  f_r^X \le x_j,
  \quad
  \forall r,\ \forall j \in \supp^X_U(r),
  \label{eq:fixed-X}
\end{equation}
\begin{equation}
  f_r^Z \le 1 - x_j,
  \quad
  \forall r,\ \forall j \in \supp^Z_U(r).
  \label{eq:fixed-Z}
\end{equation}
These constraints ensure that whenever $f_r^{X}$ for $ f_r^Z $ is 1, the basis choices on $U$ are consistent with that row being fixed.


\paragraph{Protection of target qubits.}
For each target $j \in I$, we want $p_j = 1$ only if there exists at least one fixed stabilizer that detects the relevant reset error. Using the detection sets in \Cref{eq:DX,eq:DZ}:
\begin{itemize}
  \item If $b_j = X$, then the reset error is modeled as $Z_j$. This is detected by $X$-type rows in $D_X(j)$, but only if they are fixed. Thus we require
  \begin{equation}
    p_j \le \sum_{r \in D_X(j)} f_r^X,
    \qquad \text{if } b_j = X.
    \label{eq:protect-X}
  \end{equation}
  If $D_X(j) = \emptyset$, then there is no stabilizer that anticommutes with $Z_j$, and we simply impose $p_j = 0$.
  \item If $b_j = Z$, then the reset error is modeled as $X_j$. This is detected by $Z$-type rows in $D_Z(j)$, but only if they are fixed. Thus we require
  \begin{equation}
    p_j \le \sum_{r \in D_Z(j)} f_r^Z,
    \qquad \text{if } b_j = Z.
    \label{eq:protect-Z}
  \end{equation}
  If $D_Z(j) = \emptyset$, then we again set $p_j = 0$.
\end{itemize}
\Cref{eq:protect-X,eq:protect-Z} ensure that a qubit can be marked as protected ($p_j = 1$) only if at least one corresponding fixed stabilizer is available.

\paragraph{Integrality.}
All variables are binary:
\begin{equation}
\begin{aligned}
  x_j &\in \{0,1\}, \quad \forall j \in U, \\
  f_r^X &\in \{0,1\}, \quad \forall r \text{ with } \supp^X_U(r) \neq \emptyset, \\
  f_r^Z &\in \{0,1\}, \quad \forall r \text{ with } \supp^Z_U(r) \neq \emptyset, \\
  p_j &\in \{0,1\}, \quad \forall j \in I.
\end{aligned}
\label{eq:integrality}
\end{equation}

\subsection{MILP: Objective}

The goal of the optimization is to maximize the number of qubits on the supports of the injected logicals whose reset errors are detected by some fixed stabilizer. This is captured by the objective
\begin{equation}
  \max \sum_{j \in I} p_j.
  \label{eq:objective}
\end{equation}


Together with \Cref{eq:integrality}, the complete optimization problem is:
\begin{equation}
  \label{eq:max-protection-opt}
  \begin{aligned}
    &\max_{\substack{\{x_j\}_{j \in U},\ \{f_r^X\},\\ \{f_r^Z\},\ \{p_j\}_{j \in I}}} \quad
       \sum_{j \in I} p_j
    \\[4pt]
    \quad
      & f_r^X \le x_j,
      && \forall r,\ \forall j \in \supp^X_U(r),
      \\[2pt]
      & f_r^Z \le 1 - x_j,
      && \forall r,\ \forall j \in \supp^Z_U(r),
      \\[4pt]
      & p_j \le \sum_{r \in D_X(j)} f_r^X,
      && \forall j \in I \text{ with } b_j = X,
      \\[2pt]
      & p_j \le \sum_{r \in D_Z(j)} f_r^Z,
      && \forall j \in I \text{ with } b_j = Z.
  \end{aligned}
\end{equation}


Besides maximizing the number of protected qubits,
with simple modifications,
we can also maximize the number of fixed stabilizers:

\begin{equation}
  \begin{aligned}
    \text{maximize } \quad &
      \sum_{\substack{\forall r \\ \supp_U^X(r) \neq \emptyset}} f_r^X
      +
      \sum_{\substack{\forall r \\ \supp_U^Z(r) \neq \emptyset}} f_r^Z
    \\[3pt]
    \text{subject to } \quad &
      f_r^X \le x_j,
      \quad \forall\, r,\; j \in \supp_U^X(r),
    \\[3pt]
    &
      f_r^Z \le 1 - x_j,
      \quad \forall\, r,\; j \in \supp_U^Z(r).
  \end{aligned}
\end{equation}

In \Cref{sec:evaluation},
we compare the two solutions.
It turns out that maximizing the protection yields lower injection error rate.
This is because any error on the support of an injected logical ($ \bar X_i $ or $ \bar Z_i $),
if not detected by any fixed stabilizer,
will \emph{directly} lead to a first-order logical error on the logical rotated Pauli observable $ \bar M_{i} $.
This is well-captured by the idea to maximize the number of protected qubits.


\bibliography{ref_kun,ref_additional}

\end{document}